\newcommand{\np}{\ensuremath{\mbox{NP}}}
\newcommand{\st}{\ensuremath{\mbox{s.t.}}}
\def\OPT{\textsf{OPT}}
\def \cost{\textsf{cost}}
\DeclareMathOperator{\argmax}{arg\,max}
\algnewcommand{\Indstatex}[1]{\Statex \hskip\ALG@thistlm #1}
\newenvironment{proof*}[1]
  {
   \begin{proof}}
  {\end{proof}}
\begin{document}

\title{Approximating Stable Matchings with Ties of Bounded Size}

\author{Jochen Koenemann\inst{1} \and
Kanstantsin Pashkovich\inst{2} \and
Natig Tofigzade\inst{1}}

\authorrunning{J. Koenemann et al.}    

\institute{University of Waterloo, Waterloo ON N2L 3G1,  Canada\\ 
\email{\{jochen, natig.tofigzade\}@uwaterloo.ca}\\
\and
University of Ottawa, Ottawa ON K1N 6N5, Canada\\
\email{kpashkov@uottawa.ca}}

\maketitle

\begin{abstract} Finding a stable matching is one of the central
  problems in algorithmic game theory. If participants are allowed to
  have ties and incomplete lists, computing a stable matching of maximum cardinality is known to be \np-hard. In this paper we
  present a $(3L-2)/(2L-1)$-approximation algorithm for the stable
  matching problem with ties of size at most $L$ and incomplete preferences. Our result matches
  the known lower bound on the integrality gap for the associated LP formulation. \keywords{Stable Matching \and Approximation Algorithms \and
    Combinatorial Optimization} \end{abstract}
	
\section{Introduction}

In an instance of the classical stable matching problem we are given a
(complete) bipartite graph $G=(A \cup B, E)$ where, following standard
terminology, the nodes in $A$ will be referred to as {\em men}, and
the nodes in $B$ represent {\em women}. Each man $a \in A$
possesses a (strict, and complete) preference order over women in $B$,
and similarly, all women in $B$ have a preference order over men in
$A$. A matching $M$ in~$G$ is called {\em stable} if there are no {\em
  blocking pairs} $(a,b)$; i.e. there do not exist $(a,b) \not \in M$
where both $a$ and $b$ prefer each other over their current partners
in~$M$ (if there are any). In their celebrated
work~\cite{GaleShapley62}, Gale and Shapley proposed an efficient
algorithm for finding a stable matching, providing a constructive proof
that stable matchings {\em always} exist.

Stable matchings have wide-spread applications (e.g., see
Manlove~\cite{Ma13}), and many of these are large-scale. Therefore, as
McDermid~\cite{McDermid} points out, assuming that
preferences are complete and strict is not realistic.  Thus, in this
paper, we will focus on stable matchings in the setting where
preference lists are allowed to be incomplete and contain ties. Here, a
woman is allowed to be indifferent between various men, and similarly,
a man may be indifferent between several women.  In this setting we
consider the {\em maximum-cardinality stable matching} problem where
the goal is to find a stable matching of maximum cardinality.

It is well-known that, in the settings where $G$ is either complete or
preferences do not contain ties, all stable matchings have the same
cardinality~\cite{GaleSotomayor85}. Moreover, a straightforward extension of the algorithm in
\cite{GaleShapley62} solves our problem in these cases. When ties and
incomplete preferences are permitted simultaneously, on the other hand, 
the problem of finding a maximum-cardinality stable
matching is well-known to be \np-hard~\cite{Manlove}. Furthermore, Yanagisawa~\cite{Ya07} showed that it is \np-hard to find a
$(33/29 - \varepsilon)$-approximate, maximum-cardinality stable
matching. The same author also showed
that assuming the {\em unique games} conjecture (UGC) it is hard to achieve
performance guarantee of $4/3 - \varepsilon$.

On the positive side, maximum-cardinality stable matchings with ties and incomplete preferences
have attracted significant attention
\cite{Bauckholt,HK15,Iwamaetal07,Iwamaetal14,Kiraly11,algorithms,Lam,LamPlaxton,mcdermid_first_approx,paluch_original}.
The best-known approximation algorithms for the problem achieve an
approximation ratio of
$3/2$~\cite{Kiraly11,mcdermid_first_approx,paluch_original}.

How does the hardness of maximum-cardinality stable matching depend on
the maximum allowed size of ties in the given instance? Huang and
Kavitha~\cite{HK15} recently considered the case where the {\em size
  of any tie} is bounded by $L=2$. The authors proposed an algorithm
and showed that its performance guarantee is at most $10/7$.  Chiang
and Pashkovich~\cite{CP18} later provided an improved analysis for the
same algorithm, showing that its real performance ratio is at most
$4/3$, and this result is tight under the UGC~\cite{Ya07}.  Lam
and Plaxton~\cite{LamPlaxton2019} very recently designed a
$1+(1-1/L)^L$-approximation algorithm for the so-called {\em
  one-sided} special case of our problem, where only preferences of
men are allowed to have ties.

\subsection{Our Contribution}

Our main result is captured in the following theorem. Note that the integrality gap of the natural LP relaxation for the problem is at least $(3L-2)/(2L-1)$~\cite{Iwama2014}. Hence, the performance ratio of our algorithm matches the known lower bound on the integrality gap.

\begin{restatable}{theorem}{mainthm}
  \label{thm:main result}
  Given an instance of the maximum-cardinality stable matching problem
  with incomplete preferences, and ties
  of size at most $L$; the polynomial-time algorithm described in Section \ref{algorithm} finds a
  stable matching $M$ with 
  \[
    |M|\ge \frac{2L-1}{3L-2}  \,|\OPT|,
  \]
  where $\OPT$ is an optimal stable matching.
\end{restatable}

Our algorithm is an extension of that by Huang and
Kavitha~\cite{HK15} for ties of size two: every man has $L$ proposals where each proposal goes to the acceptable women. Women can {\em accept} or
{\em reject} these proposals under the condition that no woman holds
more than $L$ proposals at any point during the algorithm.  Similar to
the algorithm in \cite{HK15}, we use the concept of {\em promotion} introduced by Kir{\'a}ly~\cite{Kiraly11} to
grant men repeat chances in proposing to women.  In comparison to
\cite{HK15}, the larger number of proposals in our algorithm leads to
subtle changes to the forward and rejection mechanisms of women, and to further
modifications to the way we obtain the output matching.

Our analysis is inspired by the analyses of both, Chiang and
Pashkovich~\cite{CP18}, and Huang and Kavitha~\cite{HK15}, 
but requires several new ideas to extend it to
the setting with larger ties. In both~\cite{HK15} and~\cite{CP18}, the analyses are based on \emph{charging schemes}: some objects are first assigned some values, called charges, and then charges are redistributed to nodes by a cost function. After a charging scheme is determined, relations between the generated total charges, and the sizes of output and optimal matchings are established, respectively, that lead to an approximation ratio.
The analysis in~\cite{HK15}
employs a complex charging scheme that acts {\em globally},
possibly distributing charges over the entire
graph. In contrast, the charging scheme in~\cite{CP18} is {\em
  local} in nature, and exploits only the local structure of the
output and optimal matchings, respectively.

We do not know of a direct way to extend the local {\em cost}-based
analysis of \cite{CP18} to obtain an approximation algorithm whose
performance beats the best known $3/2$-approximation for the general
case. Indeed we believe that
any such improvement must involve a non-trivial change in the charging
scheme employed. As a result, we propose a new analysis that combines
local and global aspects from \cite{CP18,HK15}. The central technical
novelty in the analysis is captured by Lemma~\ref{lem:cost_M_augmenting path} that provides an improved lower bound on the {\em cost} of components whereas Corollary~\ref{cor:cost_not_M_augmenting path} bounds the cost from below by
a simple multiple of the number of edges that are contained both in an optimal matching and in
the components.
 As we will see later, our new charging
scheme allows for a more fine-grained accounting of augmenting paths
for the output matching of our algorithm. 
\section{Algorithm for Two-Sided Ties of Size up to $L$}\label{algorithm}

We introduce some notational conventions. Let $a', a'' \in A$ be on the preference
list of $b \in B$. We write $a' \simeq_b a''$ if  $b$ is
\emph{indifferent} between $a'$ and $a''$, and we write 
$a'>_b a''$, or  $a' \geq_b a''$ if $b$ {\em strongly}, or {\em
  weakly} prefers $a'$ over $a''$, respectively. The preferences of
men over women are defined analogously. For $c \in A \cup B$, we let $N(c)$ denote the set of nodes adjacent to $c$ in~$G$.

\subsection{How men propose}

Each man $a\in A$ has $L$ proposals $p_a^1, p_a^2, \ldots, p_a^L$.
A man starts out as {\em basic}, and later becomes {\em 1-promoted}
before he is eventually elevated to {\em 2-promoted} status. Each man $a \in A$
has a \emph{rejection history} $R(a)$ which records the women who
rejected a proposal from $a$ during his current promotion status.
Initially, we let $R(a)=\varnothing$, for all $a \in A$. 

Each proposal $p_a^i$ for $a\in A$ and $i = 1,2,\ldots ,L$ goes to a
woman in $N(a)\setminus R(a)$ most preferred by $a$, and ties are
broken arbitrarily.  If a proposal $p_a^i$ for $a\in A$ and
$i = 1,2,\ldots,L$ is rejected by a woman $b\in B$, $b$ is added to
the rejection history of $a$, and subsequently, $p_a^i$ is sent to a
most preferred remaining woman in $N(a)\setminus R(a)$.

Suppose now that $R(a)$ becomes equal to $N(a)$ for some man
$a \in A$. If $a$ is either basic or 1-promoted then $a$'s rejection
history is cleared, and $a$ is promoted.
Otherwise, if $a$ is already 2-promoted, $a$ stops making proposals. 

\subsection{How women decide}

Each woman $b\in B$ can hold up to $L$ proposals, and among these
more than one can come from the same man. 
Whenever she holds less
than $L$ proposals, newly received proposals are automatically
accepted.  Otherwise, $b$ first tries to \emph{bounce} one of her
proposals, and if that fails, she will try to \emph{forward} one of her
proposals. If $b$ can neither bounce nor forward a
proposal, then $b$ rejects a proposal.

We continue describing the details. 
In the following, we let $P(b)$ and $A(b)$ denote the set of proposals
held by $b \in B$ at the current point, and the set of men
corresponding to these, respectively. Suppose that $|P(b)|=L$, and
that $b$ receives a new proposal $p_a^i$ for some $a\in A$ and
$i=1,\ldots, L$.

{\bf Bounce step.}  If there is a man $\alpha \in A(b) \cup \{a\}$ and
a woman $\beta\in B\setminus\{b\}$ such that
$\beta \simeq_{\alpha} b$, and $\beta$ currently holds less than $L$
proposals, then we move one of $\alpha$'s proposals from $b$ to $\beta$,
and we call the bounce step \emph{successful}.

{\bf Forward step.} If there is a man $\alpha \in A(b) \cup \{a\}$ and
a woman $\beta\in B \setminus \{b\}$ such that
$\beta \simeq_{\alpha} b$, at least two proposals from
$\alpha$ are present in $P(b)$, no proposal from $\alpha$ is present
in $P(\beta)$ and $\beta$ is not in $R(\alpha)$, then $b$ {\em forwards} a
proposal $p_{\alpha}^j \in P(b)\cup \{p^i_a\}$ for some $j=1,\ldots, L$ to $\beta$ and the
forward step is called \emph{successful}. As a consequence of a
successful forward step, $\alpha$ makes the proposal $p_{\alpha}^j$ to
$\beta$.

We point out that bounce and forward steps do not lead to an update to
the rejection history of an involved man. To describe the rejection
step, we introduce the following notions. For a woman $b\in B$,
a proposal $p_{a'}^{i'}$ is called \emph{more desirable} than
$p_{a''}^{i''}$ for $a',a''\in A$ and $i', i''=1,\ldots, L$ if $b$
strongly prefers $a'$ to $a''$, or if $b$ is indifferent between $a'$
and $a''$ and $a'$ has higher promotion status than $a''$.  A
proposal $p^{i'}_{a'} \in P(b)$ is \emph{least desirable} in $P(b)$
if $p^{i'}_{a'}$ is not more desirable than any proposal in $P(b)$.
Whenever $b \in B$ receives a proposal $p_a^i$, $|P(b)|=L$, and
neither {\em bounce} nor {\em forward} steps are successful, we execute a
rejection step.

{\bf Rejection step.}  If there is unique \emph{least desirable
  proposal} in $P(b) \cup \{p_a^i\}$, then $b$ rejects that proposal.
Otherwise, if there are more than one least desirable proposal in
$P(b)$, $b$ rejects a proposal from a man with the largest number of
least desirable proposals in $P(b) \cup \{p_a^i\}$. If
there are several such men, then we break ties arbitrarily.
Subsequently, $b$ is added to the rejection history of the man whose proposal
is rejected. 

\subsection{The algorithm}

An approximate maximum-cardinality stable matching for a given
instance $G = (A \cup B, E)$ is computed in two stages.

{\bf Stage 1.} Please see Algorithm \ref{stage1:pseudo}  for
the pseudo code for Stage 1. 

Men propose in an arbitrary order and women bounce,
forward or reject proposals as described above. The first stage
finishes, when for each man $a \in A$, one of the following two conditions
is satisfied: all proposals of $a$ are accepted; $R(a)$ becomes
equal to $N(a)$ for the third time.

We represent the outcome of the first stage as a bipartite graph 
$G' = (A \cup B, E')$ with the node set $A \cup B$ and the edge set
$E'$, where each edge $(a, b) \in E'$ denotes a proposal from $a$ held
by $b$ at the end of the first stage. Note that $G'$ may be a
multigraph in which an edge of the form $(a, b)$ appears with
multiplicity equal to the number of proposals that $b$ holds from $a$.
Clearly, each node $u$ in~$G'$ has degree at most $L$, denoted by
$\deg_{G'}(u) \le L$, since every man has at most $L$ proposals that
may be accepted and every woman can hold at most $L$ proposals at any
point in the first stage. 

\begin {algorithm}[!ht]
\caption{Pseudo code for Stage 1 of the algorithm\label{stage1:pseudo}}
\begin{algorithmic}[1]
\State let $G = (A \cup B, E)$ be an instance graph, and $N(c)$ denote the set of nodes adjacent to $c \in A\cup B$ in~$G$
\State let $G' = (A \cup B, E')$ be a multigraph with $E'$ initialized to the empty multiset of edges
\State let $\deg_{G'}(u)$ denote the degree of node $u$ in~$G'$, and $A(b)$ denote the set of nodes adjacent to $b \in B$ in~$G'$
\ForAll {$a \in A$}
\State $R(a) \coloneqq \varnothing$ \Comment $R(a)$ is the rejection history of man $a$
\State $stat_{a} \coloneqq 0$ \Comment $stat_a$ is the promotion status of man $a$
\EndFor
\While{$\exists a \in A$ s.t. $\deg_{G'}(a) < L$ and $R(a) \neq N(a)$}
\State let $b \in N(a) \setminus R(a)$ be a woman s.t. $b \ge_{a} b'$ for all $b' \in N(a) \setminus R(a)$
\State \Call {propose}{$a, b$}
\EndWhile
\State \textbf{return} $E'$
\Statex \{The following subroutine describes how $b$ accepts the proposal from $a$, or bounces, forwards, or rejects a proposal\}
\Procedure {propose}{$a, b$}
\If{$\deg_{G'}(b) < L$}
\State $E' \coloneqq E' \cup \{(a, b)\}$ 
\ElsIf {$\exists \alpha \in A(b) \cup \{a\}$ and $\exists \beta \in N(\alpha)$ s.t. $\beta \simeq_{\alpha} b$ and $\deg_{G'}\beta < L$}
\State $E' \coloneqq E' \cup \{(a, b), (\alpha, \beta)\} \setminus \{(\alpha, b)\}$\Comment bounce 
\ElsIf {$\exists\alpha \in A(b) \cup \{a\}$ and $\exists \beta \in N(\alpha) \setminus R(\alpha)$ s.t. $\beta \simeq_{\alpha} b$, 
\Indstatex $|(E' \cup \{(a, b)\}) \cap \{(\alpha, b)\}| \ge 2$ and $\alpha \notin A(\beta)$}
\State $E' \coloneqq E' \cup \{(a, b)\} \setminus \{(\alpha, b)\}$
	\State \Call {propose}{$\alpha, \beta$}\Comment forward
\Else		
\State let $\mathcal{A}$ denote $\{\alpha \in A(b) \cup \{a\} :$ for all $a' \in A(b) \cup \{a\}$, $\alpha \le_{b} a'$ and 
\Indstatex if $\alpha \simeq_{b} a'$, then $stat_{\alpha} \le stat_{a'}\}$ 
\State let $\alpha_0$ be a man in $\argmax_{\alpha \in \mathcal{A}} |(E' \cup \{(a, b)\}) \cap \{(\alpha, b)\}|$
\State $E' \coloneqq E' \cup \{(a, b)\} \setminus \{(\alpha_0, b)\}$\Comment reject
\State $R(\alpha_0) \coloneqq R(\alpha_0) \cup \{b\}$
	\If{$R(\alpha_0) = N(\alpha_0)$}
		\If{$stat_{\alpha_0} < 2$}
		\State $stat_{\alpha_0} \coloneqq stat_{\alpha_0} + 1$
		\State $R(\alpha_0) \coloneqq \varnothing$
		\EndIf
	\EndIf
\EndIf
\EndProcedure
\end {algorithmic}
\end {algorithm}

{\bf Stage 2.} We compute a maximum-cardinality matching $M$ in~$G'$
such that all nodes of degree $L$ in~$G'$ are matched.  The existence
of such matching is guaranteed by Lemma~\ref{lem:M_existence}. The
result of the second stage is such a matching $M$, that is the
output of the algorithm.

\begin{lemma}\label{lem:M_existence}
  There exists a matching in the graph $G'$ such that all nodes of
  degree $L$ in~$G'$ are matched. Moreover, there is such a matching
  $M$, where all nodes of degree $L$ in~$G'$ are matched and we have
	\[
	|M|\geq |E'|/L\,.
	\]
\end{lemma}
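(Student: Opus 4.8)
The plan is to derive both statements at once from König's edge-coloring theorem for bipartite multigraphs. First I would record the structural facts already noted in the construction: $G'$ is a bipartite multigraph whose edges all join $A$ to $B$, and its maximum degree is at most $L$, since every man $a$ has only the $L$ proposals $p_a^1,\dots,p_a^L$ (so $\deg_{G'}(a)\le L$) and every woman holds at most $L$ proposals at every moment of Stage~1 (so $\deg_{G'}(b)\le L$).

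Next I would apply König's line-coloring theorem, which says that the edges of a bipartite multigraph can be properly colored with $\Delta$ colors, where $\Delta$ is its maximum degree. Applied to $G'$, this partitions $E'$ into matchings $M_1,\dots,M_L$ (allowing some of them to be empty if $\Delta(G')<L$). The key observation is that if a node $u$ has $\deg_{G'}(u)=L$, then its $L$ incident edges receive $L$ pairwise distinct colors, hence exactly one edge of each color class is incident to $u$; consequently $u$ is matched by \emph{every} $M_i$. Then, since $\sum_{i=1}^{L}|M_i|=|E'|$, an averaging argument yields an index $j$ with $|M_j|\ge |E'|/L$. Taking $M:=M_j$ gives a matching in $G'$ that simultaneously saturates all degree-$L$ nodes and satisfies $|M|\ge |E'|/L$, which proves both parts of the lemma. (For Stage~2 one may then take any maximum-cardinality matching in $G'$ that saturates all degree-$L$ nodes; its size is at least $|M_j|\ge |E'|/L$, and such a matching is computable in polynomial time, for instance directly from a polynomial-time proper edge coloring, or by a standard bipartite matching computation in which the degree-$L$ nodes are forced to be covered.)

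The only mildly delicate point is that $G'$ may genuinely be a multigraph: a woman $b$ can hold several proposals from the same man $a$, producing parallel copies of the edge $(a,b)$. This is harmless, because König's theorem holds verbatim for bipartite multigraphs, and parallel edges incident to a degree-$L$ node still receive pairwise distinct colors, so the saturation argument goes through unchanged. Beyond this, the proof is essentially a pigeonhole argument on the color classes.
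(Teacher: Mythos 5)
Your proof is correct, but it follows a genuinely different route from the paper. The paper sets up a linear program over $G'$ --- maximize $\sum_e x_e$ subject to $x(\delta(u))\le 1$ for all nodes, $x(\delta(u))=1$ for all degree-$L$ nodes, and $x\ge 0$ --- observes that this feasible region is an integral polyhedron (a face of the bipartite matching polytope), and uses the fractional point assigning $1/L$ to every edge both to certify feasibility and to show the optimum is at least $|E'|/L$; an integral optimal solution is then the desired matching. You instead invoke K\H{o}nig's edge-coloring theorem to partition $E'$ into $L$ matchings, note that every degree-$L$ node meets each color class exactly once and is therefore saturated by \emph{every} class, and pigeonhole to extract a class of size at least $|E'|/L$. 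Both arguments are sound; yours is more elementary and combinatorial, is explicitly constructive (the edge coloring directly yields the matching in polynomial time, whereas the paper relies on finding an optimal extreme point of the LP), and in fact proves the slightly stronger fact that \emph{all} $L$ color classes saturate the degree-$L$ nodes. The paper's LP formulation is more compact to state given the known integrality of the bipartite matching polytope and dovetails with how Stage~2 is implemented in the running-time discussion. You correctly flag the multigraph issue and note that K\H{o}nig's theorem holds verbatim there, so there is no gap.
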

\begin{proof}
Consider the graph $G'=(A\cup B, E')$ and the following linear program
\begin{align*}
  \max ~~ &  \sum_{e\in E'} x_e   \\
  \st ~~ &
           \sum_{e\in \delta(u)}x_e \leq  1 ~~ (u\in A\cup B) \\
          & \sum_{e\in \delta(u)}x_e=1  ~~ (u\in A\cup B, \deg_{G'}(u)=L) \\
          & x\geq 0.
\end{align*}
It is well-known that the feasible region of the above LP is an
integral polyhedron. Moreover, the above LP is feasible as is easily
seen by considering the point that assigns $1/L$ to each edge in
$E'$. Hence there exists an integral point optimal for this linear
program. Notice, that every integral point feasible for this linear
program is a characteristic vector of a matching in~$G'$, which
matches all nodes of degree $L$ in~$G'$.  To finish the proof, notice
that the value of the objective function calculated at $x^\star$
equals $|E'|/L$. Thus the value of this linear program is at least
$|E'|/L$, finishing the proof.
\qed
\end{proof}

\subsection{Stability of output matching}

Let the above algorithm terminate with a matching $M$. We first argue that it is stable.

\begin{lemma}\label{lem:M_stability}
	The output matching $M$ is stable in $G=(A\cup B, E)$.
\end{lemma}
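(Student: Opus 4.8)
The plan is to show that any edge $(a,b)\in E\setminus M$ fails to block $M$, by arguing that at least one of $a,b$ weakly prefers its $M$-partner. The natural case split is on whether $a$ ever exhausted all three promotion rounds (i.e. became $2$-promoted and had $R(a)=N(a)$ a third time) or whether Stage 1 terminated for $a$ because all $L$ of his proposals were being held.

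First I would handle the man's side. If $a$ is unmatched in $M$, then $\deg_{G'}(a)<L$ (since Lemma~\ref{lem:M_existence} matches all degree-$L$ nodes), so Stage~1 must have ended for $a$ with $R(a)=N(a)$ for the third time; in particular $b\in N(a)$ was placed in $R(a)$ at some point while $a$ was $2$-promoted, meaning a proposal of $a$ to $b$ was rejected then. More generally, if $a$ is matched to some $b'$ with $b>_a b'$ (strict), I want to derive the same conclusion: $a$ must have proposed to $b$ and been rejected, and since proposals only move to less preferred women, by the end $b$ had been rejected by $a$ at his final (highest) promotion status. So in all ``potential blocking'' cases from $a$'s viewpoint, $b$ once held a proposal from $a$ at a promotion status $\ge$ anything relevant and later discarded it via a rejection step (bounce and forward steps, as the excerpt notes, never touch rejection histories and only move a proposal to a woman $\beta$ with $\beta\simeq_\alpha b$, so they don't create the strict improvement $a$ would need).

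Next, the woman's side: I would show that once $b$ rejects a proposal from $a$, every proposal $b$ holds from that point on — and hence her $M$-partner's edge — is at least as desirable as $a$'s was, and in fact the relevant comparison gives $a'\ge_b a$ or ($a'\simeq_b a$ with status at least $a$'s), so $b$ does not strictly prefer $a$ to her final partner. The key monotonicity facts are: (i) the multiset of proposals held by $b$ only ever ``improves'' in the desirability pre-order — a rejection step removes a least desirable proposal, a bounce/forward step removes a proposal and what remains is no worse — and (ii) $b$'s degree, once it reaches $L$, stays at $L$, so from the moment $b$ first rejects $a$ she always holds $L$ proposals, each weakly more desirable than the rejected one. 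Combining (i)–(ii): at the end of Stage~1, $b$ holds $L$ proposals all weakly preferred (in the $\ge_b$ sense, accounting for promotion) to $a$'s rejected proposal, so the edge of $M$ at $b$ goes to some $a'$ with $a'\ge_b a$. Together with the man's-side analysis this contradicts $(a,b)$ being a blocking pair.

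The main obstacle I expect is bookkeeping the interaction between promotion status and the desirability pre-order across bounce/forward moves: I need that a forwarded or bounced proposal $p_\alpha^j$ arriving at $\beta$ (with $\beta\simeq_\alpha b$) does not let $\alpha$ later ``sneak back'' and block via $b$, and that clearing $R(\alpha)$ on promotion doesn't invalidate the claim ``$b$ rejected $\alpha$ at his final status.'' The resolution is that if $a$'s rejection history was cleared after $b$ rejected him, then $a$ was promoted, so he will propose again to his most preferred available women including $b$ (since $b\notin R(a)$ after the clear), and the argument reapplies at the higher status — so the truly final rejection of $a$ by $b$ (or the fact that $b$ still holds $a$'s proposal at the end, in which case $a'=a\ge_b a$ trivially and we instead use that $a$ weakly prefers $b'=b$... ) occurs at $a$'s terminal status, which is exactly what the woman's-side bound needs. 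I would also need to separately note the edge case where $b$ never rejected $a$ at all: then $b$ still holds a proposal from $a$ in $G'$, so either that edge is in $M$ (not a non-edge of $M$) or $b$ has degree $L$ and is matched to someone at least as desirable, again no blocking pair.
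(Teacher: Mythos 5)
Your proof follows essentially the same route as the paper's: the argument reduces to whether $b$ ever rejected a proposal from $a$ (in which case she finishes Stage~1 holding $L$ proposals, all from men she weakly prefers to $a$, so $M(b)\ge_b a$) or never did (in which case $b$ never enters $R(a)$, all $L$ of $a$'s proposals end up accepted by women he weakly prefers to $b$, and $M(a)\ge_a b$), which is exactly the paper's two-case proof with the cases taken in the opposite order. One caveat: your monotonicity claim~(i) is stated too strongly --- a successful bounce or forward removes a held proposal but also admits the incoming one, which need not be at least as desirable as what was removed --- although the conclusion you extract from it (that after rejecting $a$ she only ever holds proposals from men weakly preferred to $a$) is precisely the assertion the paper itself makes without further elaboration.
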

\begin{proof}
  Suppose for contradiction that $M$ is not stable,
  i.e. suppose that there exists an edge $(a,b) \in E$ that blocks
  $M$.  If $b$ rejected a proposal from $a$ during the algorithm, then
  $b$ holds $L$ proposals when the algorithm terminates and all these proposals are from
  men, who are weakly preferred by $b$ over $a$. Thus the degree of 
  $b$ in~$G'$ is $L$ implying that $b$ is matched in~$M$ with a man, who is
  not less preferred by $b$ than $a$. We get a contradiction to the
  statement that $(a,b)$ blocks $M$.
	
  Conversely, if $b$ did not reject any proposal from $a$ during the algorithm,
  then the algorithm terminates with all $L$ proposals of $a$ being accepted, particularly, by women, who are weakly preferred by $a$
  over $b$. Therefore the degree of $a$ in~$G'$ is $L$ implying that $a$ is
  matched in~$M$ with a woman, who is not less preferred by $a$ than
  $b$. Again, we get a contradiction to the statement that $(a,b)$ is a blocking pair for $M$.  
  \qed
\end{proof}

\subsection{Running time}

We now show that 
each stage of the algorithm has polynomial execution time. For the first stage, we illustrate that only a polynomial number of proposals are
bounced, forwarded, or rejected during this stage. For the second stage, the
proof of Lemma~\ref{lem:M_stability} implies that it is sufficient to find an optimal extreme solution for a linear program of
polynomial size.

First, we show that proposals are bounced only polynomially many times. For every $b \in B$,
at most $L$ proposals may be bounced to $b$. Indeed, with each
proposal bounced to $b$, the number of proposals held by $b$ increases;
also, the number of
proposals held by $b$ never decreases or exceeds $L$ during the algorithm. Hence at most $L |B|$
proposals are bounced during the first stage. 

Second, we illustrate that proposals are forwarded only polynomially
many times. For each $a\in A$, promotion status of $a$, and $b\in B$ such that $(a, b) \in E$, at most one proposal of $a$ may be forwarded to $b$. To see this, let $b'$ be a woman forwarding a proposal of $a$ to $b$. Notice that $b$ cannot bounce the proposal after $b$ receives it because, otherwise, $b'$ could bounce it by the transitivity of indifference. Observe also that $b$ may forward a proposal from $a$ only if she holds another proposal from him. Then it follows from the forward step that no woman can forward a proposal of $a$ to $b$ as long as $b$ holds a proposal from him. If $b$ rejects the proposal, then she is added to the rejection history of $a$, and so $b$ does
not receive any proposal from $a$ unless the promotion status of $a$
changes. Hence at most $3 |A| |B|$ proposals are forwarded during
the first stage.

Finally, for each $a\in A$, promotion status of $a$, and $b\in B$ such that $(a, b) \in E$, $b$ may reject at most $L$
proposals from $a$. Indeed, $b$ holds at most $L$ proposals at any point in time, and since $b$ is added to the rejection history of $a$ after she rejected him, $b$ does
not receive any proposal from $a$ unless the promotion status of $a$
changes. Hence at most $3 L |A| |B|$ proposals are rejected during
the first stage.

\section{Tight Analysis}

Recall that $\OPT$ is a maximum-cardinality stable matching in~$G$,
and let $M$ be the output matching defined above. If $a \in A$ is
matched with $b \in B$ in~$\OPT$, we write
$\OPT(a) \coloneqq b$ and $\OPT(b) \coloneqq a$. Similarly, we use the
notations $M(a) \coloneqq b$ and $M(b) \coloneqq a$ when $a \in A$ is
matched with $b \in B$ in~$M$. Note that our analysis is based on graph $G'$ and therefore all graph-related objects will assume $G'$.

\begin{definition}
A man $a \in A$ is called \emph{successful} if the algorithm terminates with all of his $L$
proposals being accepted. Likewise, a woman $b$ is called
\emph{successful} if she holds $L$ proposals when the algorithm stops. In other words, a person $c \in A \cup B$ is \emph{successful} if the degree of $c$ in~$G'$ is $L$, and \emph{unsuccessful} otherwise.
\end{definition}

\begin{definition} A woman is called \emph{popular} if she rejected a proposal during the algorithm, and \emph{unpopular} otherwise.
\end{definition}

Remarks~\ref{rem:bounce_step} and~\ref{rem:rejection_step} below directly follow from the algorithm and are consequences of the bouncing step, and the rejection step, respectively.

\begin{remark}\label{rem:bounce_step}
  Let $a \in A$ and $b,b'\in B$ be such that $b$ holds a
  proposal from $a$ when the algorithm finishes, $b'$ is unsuccessful, and $b' \simeq_a b$. Then
  $b$ is unpopular.
\end{remark}
\begin{proof}
  Suppose for contradiction that $b$ is popular. Then at some point
  she could not bounce or forward any one of her proposals and so she
  was to reject a proposal. This implies that after $b$ became
  popular, whenever she received a new proposal that could be bounced,
  that proposal would immediately be bounced. But then, when the
  algorithm terminates, $b$ holds a proposal from $a$, that could
  successfully be bounced to~$b'$, a contradiction.  \qed
\end{proof}

\begin{remark}\label{rem:rejection_step}
	Let $a,a' \in A$ and $b \in B$ be such that $b$ holds at least two
  proposals from $a$ when the algorithm finishes, $b$ rejected a proposal from $a'$ at some point, $a$ is basic, and $a' \simeq_b a$. Then there is an edge $(a', b)$ in~$G'$.
\end{remark}
\begin{proof}
	Suppose for a contradiction that $(a', b) \notin G'$ holds. Let $t$ be the most recent point in time when $b$ rejects a proposal from $a'$. Then it follows from the algorithm that, at $t$, $a'' \ge_{b} a'$ holds for all $a'' \in A(b)$. The rejection step also implies that, at $t$, there is no $a'' \in A$ such that $a' \simeq_b a''$, $a''$ is basic, and $b$ holds more than one proposal from $a''$. Moreover, the algorithm implies that, after $t$, whenever she receives a new proposal from a man $a''$ such that $a'' <_b a'$, she will immediately reject it unless she successfully bounces or forwards it. Now, consider a point in time after $t$ when there is a man $a''$ such that $a' \simeq_b a''$, $b$ already holds a proposal from $a''$, and receives another proposal from $a''$. Then the rejection step implies that she will reject one of the proposals from $a''$ unless she successfully bounces or forwards it. But then, when the algorithm terminates, $b$ holds at least two proposals from $a$, a contradiction.
\end{proof}

\subsection{Analytical techniques}

In the following, we define \emph{inputs}, \emph{outputs}, and
\emph{costs} -- notions that are central in the analysis of our charging scheme.
Before we take a closer look at these notions and
define them formally, let us discuss phenomena captured by them.

We use two different objects, inputs and outputs, to
differentiate between two different viewpoints on proposals accepted
when the algorithm ends. In particular, inputs are associated with
the viewpoint of women on the proposals whereas outputs are associated with the viewpoint of men. The choice of
terms ``inputs" and ``outputs" is due to the analysis in~\cite{HK15} where the edges of $G'$ are directed from men to women, and
so each proposal becomes an ``input" for the woman, and analogously becomes an
``output" for the corresponding man.

Now we describe the ideas that motivated our definitions concerning
outputs and inputs. Let $M + \OPT$ denote the multiset that contains the edges in~$M$ and the edges in~$\OPT$. To establish the approximation guarantee of our
algorithm, we analyze each connected component in $M + \OPT$. In order
to show that $M$-augmenting paths in $M + \OPT$ do not lead to a large
approximation guarantee, we introduce the notions of \emph{bad} and
\emph{good inputs} as well as \emph{bad} and \emph{good outputs}. For
example, a certain number of bad inputs and bad outputs are generated
by the edges incident to the endpoints of an $M$-augmenting path in $M + \OPT$. Indeed, as we will see later, if $a_0-b_0-a_1- \dotsc -a_k-b_k$ is an $M$-augmenting path in
$M + \OPT$ of length $2k +1$, $k \ge 2$ where $a_0 \in A$, then $b_0$ has at least $L-2$
bad inputs and $a_k$ has at least $L-2$ bad outputs. Then to show the
approximation guarantee of $(3L-2)/(2L-1)$, we provide a way to obtain a lower
bound on the number of bad inputs and bad outputs of men and women in
each $M$-augmenting path; and later we provide an upper bound on the
total number of bad inputs and bad outputs of all men and women.

To implement the above ideas, we use a charging scheme. Our charging scheme
associates a cost with each man and each woman. These costs
keep track of bad inputs and bad outputs: bad inputs lead to an increase of the corresponding woman's
cost and bad outputs lead to an increase of the corresponding man's
cost. We show that the total cost of all men and women is bounded above by
$2L|M|$. On the other side, we provide a lower bound
on the total cost by giving a lower bound on the cost of each connected component in $M + \OPT$. These upper
and lower bounds lead to the desired approximation guarantee of
$(3L-2)/(2L-1)$.

\subsection{Inputs and outputs}
In our analysis inputs and outputs are fundamental edge-related objects for our charging scheme. Each edge in~$G'$ generates a certain number of charges. For example, as we will see in Section~\ref{cost}, if an edge $(a, b)$ in~$G'$ belongs either to $M$ or to $\OPT$, two charges are generated by $(a, b)$ so that one is carried to node $a$ and one is carried to node $b$ by cost function. To define similar charging mechanisms for the remaining types of edges in~$G'$, we first distinguish them as in the following definitions. 

\begin{definition}
Given an edge $(a, b)$ in~$G'$, we say that $(a, b)$ is an \emph{output from} $a \in A$ and an \emph{input to} $b \in B$ if $(a, b)$ is not in $M + \OPT$. 
\end{definition}

To illustrate how outputs and inputs are determined, for example,  let $(a,b)\in M$, $a\in A$, $b\in B$ and $n_{(a,b)}$ be  the number of edges of the form $(a,b)$ in the multigraph $G'$, then the edge $(a,b)$ gives rise to the following number $s_{(a,b)}$ of inputs (and to the same number of outputs)
\[
s_{(a,b)}:=\begin{cases}
			n_{(a,b)}-1 &\text{if }(a,b)\not\in \OPT\\
			0		&\text{if }n_{(a,b)}=1\\
			n_{(a,b)}-2 &\text{otherwise }\,.\\
		\end{cases}
\]

\begin{definition}\label{bad_good_input_output}
An input $(a, b)$ to $b \in B$ is called  a \emph{bad input} if one of the following is true:
\begin{itemize}
\item $b$ is popular and $a >_b \OPT(b)$.
\item $b$ is popular, $a \simeq_b \OPT(b)$, but $\OPT(b)$ is unsuccessful.
\item $b$ is popular, $a$ is 1-promoted, $\OPT(b)$ is successful and $M(b) \simeq_b \OPT(b) \simeq_b a$.
\end{itemize}

An input $(a, b)$ to $b \in B$ is a \emph{good input} if it is not a bad input. In other words, an input $(a, b)$ to $b \in B$ is a \emph{good input} if one of the following is true:
\begin{itemize}
\item $b$ is unpopular.
\item $b$ is popular and $\OPT(b) >_b a$.
\item $b$ is popular, $a \simeq_b \OPT(b)$, $\OPT(b)$ is successful and $a$ is not 1-promoted.
\item $b$ is popular, $a \simeq_b \OPT(b)$, $\OPT(b)$ is successful, but not $M(b) \simeq_b \OPT(b) \simeq_b a$.
\end{itemize}	

An output $(a, b)$ from a man $a$ is called a \emph{bad output} if one of the following is true:
\begin{itemize}
\item $b$ is unpopular.
\item $b$ is popular, $b >_a \OPT(a)$, $a$ is 1-promoted, but not $M(b) \simeq_b \OPT(b) \simeq_b a$.
\item $b$ is popular, $b >_a \OPT(a)$ and $a$ is basic.
\end{itemize}	

An output from a man $a$ is a \emph{good output} if that is not a bad output. In other words, an output $(a, b)$ from a man $a \in A$ is a \emph{good output} if one of the following is true:
\begin{itemize}
\item $b$ is popular and $\OPT(a) \ge_a b$.
\item $b$ is popular, $b >_a \OPT(a)$ and $a$ is 2-promoted.
\item $b$ is popular, $b >_a \OPT(a)$, $a$ is 1-promoted and $M(b) \simeq_b \OPT(b) \simeq_b a$.
\end{itemize}	
\end{definition}

\begin{lemma}\label{lem:no_bad_input_bad_output}
	There is no edge which is both a bad input and a bad output.
\end{lemma}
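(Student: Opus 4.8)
The plan is to suppose, for contradiction, that some edge $(a,b)$ in $G'$ is simultaneously a bad input to $b$ and a bad output from $a$, and derive incompatible conditions from the two lists in Definition~\ref{bad_good_input_output}. The first observation is that all three bad-input cases require $b$ to be popular, and all three bad-output cases except the first also require $b$ popular; the first bad-output case requires $b$ unpopular. Since $b$ cannot be both, the bad-output case ``$b$ unpopular'' is ruled out immediately, so $(a,b)$ must fall into one of the two remaining bad-output cases, both of which satisfy $b >_a \OPT(a)$ — that is, $a$ strictly prefers $b$ to his $\OPT$-partner.

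Next I would extract from the bad-input side the relationship between $a$ and $\OPT(b)$. In every bad-input case we have $a \ge_b \OPT(b)$ (strictly in case 1, with indifference in cases 2 and 3). I then invoke Lemma~\ref{lem:M_stability}: the output matching $M$ is stable, so $(a,b)$ — which lies in $G'$ and hence in $E$ — cannot be a blocking pair. Because $a$ strictly prefers $b$ over $\OPT(a)$ (from the bad-output side), and here we should compare with $M(a)$ rather than $\OPT(a)$; the bridge is that $a$ has an edge to $b$ in $G'$, so at termination $b$ holds a proposal from $a$. The idea is that if $b$ strictly preferred her $M$-partner over $a$ this would be consistent, but the bad-input conditions force $a \ge_b \OPT(b)$, and I need to push this to $a \ge_b M(b)$ to contradict $a$'s strict preference for $b$. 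This is the step where the promotion-status and successful/unsuccessful bookkeeping in the definitions does the real work — e.g., in bad-input case 3, $M(b) \simeq_b \OPT(b) \simeq_b a$ is explicitly assumed, while in bad-output cases the clause ``not $M(b)\simeq_b\OPT(b)\simeq_b a$'' appears, giving a direct contradiction when $a$ is $1$-promoted; and when $a$ is basic, I expect to combine $b>_a\OPT(a)$ with stability and the fact that a basic man who still has an edge to $b$ never had $b$ reject him, forcing $b$'s partner to be weakly worse, contradicting popularity-based rejection behavior.

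The main obstacle I anticipate is the careful case split on $a$'s promotion status crossed with whether $\OPT(b)$ is successful, making sure each of the (at most) $3 \times 3$ combinations of bad-input case and bad-output case is either excluded by the $b$-popular/unpopular clash or by a promotion/indifference mismatch, using Remark~\ref{rem:bounce_step} or Remark~\ref{rem:rejection_step} where a purely definitional contradiction is not immediate. In particular, the case ``bad input via $a>_b\OPT(b)$'' together with ``bad output with $a$ basic and $b>_a\OPT(a)$'' will likely need the stability of $M$ plus the rejection-history mechanics: since $b$ is popular she rejected someone, hence holds $L$ proposals all weakly preferred to the rejected man, and since $a$ has an edge to her, $a$ is among those; combined with $a>_b \OPT(b)$ this should contradict either stability or the definition of $\OPT$ being a stable matching. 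I would organize the write-up as a short enumeration of the surviving cases, each closed in one or two lines.
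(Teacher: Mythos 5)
Your overall skeleton matches the paper's: enumerate the cross-cases of Definition~\ref{bad_good_input_output}, kill the first bad-output case by the popular/unpopular clash, and kill bad-input case~3 against bad-output cases~2 and~3 by the promotion-status and $M(b)\simeq_b\OPT(b)\simeq_b a$ clauses. Those parts are right. But the two cases that actually carry content are not closed in your plan, and in one of them you are heading in the wrong direction.

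First, for the combination ``$a>_b\OPT(b)$'' (bad-input case~1) with ``$b>_a\OPT(a)$'' (bad-output cases~2 or~3), the paper's argument is one line: these two strict preferences say precisely that $(a,b)$ is a blocking pair for $\OPT$, contradicting the stability of $\OPT$. You instead propose to route through Lemma~\ref{lem:M_stability}, compare $a$ with $M(b)$, and ``push $a\ge_b\OPT(b)$ to $a\ge_b M(b)$.'' That detour is unnecessary and it is not clear it terminates: stability of $M$ gives you information about $M(a)$ and $M(b)$, but the bad-output condition is stated relative to $\OPT(a)$, and nothing in the definitions lets you convert one into the other in general. You do mention ``the definition of $\OPT$ being a stable matching'' as a fallback, but only as a guess; the proof needs this to be the primary argument.

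Second, the combination ``$a\simeq_b\OPT(b)$ and $\OPT(b)$ unsuccessful'' (bad-input case~2) with bad-output cases~2 or~3 is not addressed at all in your plan. Here the blocking-pair argument does not apply, and the paper closes it with algorithm mechanics: since $\OPT(b)$ is unsuccessful, he exhausted his list as a 2-promoted man, so $b$ rejected $\OPT(b)$ while he was 2-promoted; yet $b$ holds a proposal at termination from $a$, who is tied with $\OPT(b)$ in $b$'s list but (by the bad-output conditions) is only basic or 1-promoted, hence strictly less desirable --- contradicting the rejection step, which always discards a least desirable proposal. Your plan gestures at ``Remark~\ref{rem:bounce_step} or Remark~\ref{rem:rejection_step} where a purely definitional contradiction is not immediate,'' but neither remark is the right tool here and you never identify the unsuccessful-implies-rejected-as-2-promoted observation that makes the case work. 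Until both of these cases are written out, the proof is not complete.
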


\begin{proof}
Assume that an edge $(a, b)$, $a \in A$, $b \in B$ is both a bad input to $b$ and a bad output from $a$. First, consider the first case from the definition of a bad output. It trivially contradicts all the cases from the definition of a bad input. Second, consider the first case from the definition of a bad input and either the second or the third case from the definition of a bad output. Then the case (\ref{case:no_bad_input_bad_output1}) below is implied. Third, consider the second case from the definition of a bad input and either the second or the third case from the definition of a bad output. Then the case (\ref{case:no_bad_input_bad_output2}) below is implied. Finally, consider the third case from the definition of a bad input. It trivially contradicts both the second and the third case from the definition of a bad output. Thus one of the following cases is true:
\begin{enumerate}
    \item \label{case:no_bad_input_bad_output1} $a >_b \OPT(b)$; $b >_a \OPT(a)$.
    \item \label{case:no_bad_input_bad_output2} $a \simeq_b \OPT(b)$, and $\OPT(b)$ is unsuccessful; $a$ is not 2-promoted.
\end{enumerate}

In case~\eqref{case:no_bad_input_bad_output1}, the edge $(a, b)$ is a blocking pair for $\OPT$, contradicting the stability of $\OPT$. 

In case~\eqref{case:no_bad_input_bad_output2}, since $\OPT(b)$ is unsuccessful, $\OPT(b)$ was rejected by $b$ as a 2-promoted man. On the other hand, $a \simeq_b \OPT(b)$, $a$ is not 2-promoted, and $b$ holds a proposal from $a$ when the algorithm terminates, contradicting the rejection step.
\qed
\end{proof}

\begin{corollary}\label{cor:number_good_inputs_bad_outputs}
	The number of good inputs is at least the number of bad outputs.
\end{corollary}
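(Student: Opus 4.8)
The plan is to combine Lemma~\ref{lem:no_bad_input_bad_output} with the simple observation that inputs and outputs are really the same edge-copies of $G'$, viewed from the two endpoints. Indeed, by definition any edge $(a,b)$ of $G'$ with $(a,b)\notin M+\OPT$ is simultaneously an output from $a$ and an input to $b$, and on a matched edge the $s_{(a,b)}$ surplus copies are counted both as outputs from $a$ and as inputs to $b$. Hence every output copy is also an input copy and vice versa, and on a fixed edge there are exactly as many output copies as input copies. So it suffices to show that every bad output is a good input, and then sum over all edge-copies.

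I would then argue edge by edge. Fix an edge $(a,b)$ of $G'$ and note that whether an input $(a,b)$ is bad or good is determined entirely by $(a,b)$ together with global data (the popularity of $b$, the vertex $\OPT(b)$, whether $\OPT(b)$ is successful, the promotion status of $a$, and $M(b)$), hence is the same for every copy of $(a,b)$; the same is true for outputs. Therefore, on the edge $(a,b)$, either all output copies are bad or none is. If none is, $(a,b)$ contributes no bad outputs. If all output copies on $(a,b)$ are bad, pick any one such copy: it is also an input copy of $(a,b)$, and by Lemma~\ref{lem:no_bad_input_bad_output} it cannot be a bad input, so it is a good input; by the uniformity just noted, every input copy on $(a,b)$ is then a good input. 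Since $(a,b)$ carries equally many input copies and output copies, it contributes at least as many good inputs as bad outputs. Summing this inequality over all edges of $G'$ gives $\#\{\text{good inputs}\}\ge\#\{\text{bad outputs}\}$.

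The only mildly delicate point is the bookkeeping that identifies input copies with output copies on multi-edges (and the remark that badness/goodness is a property of the edge, not of the individual copy); once that is in place the statement is immediate from Lemma~\ref{lem:no_bad_input_bad_output} and the definition of bad/good inputs and outputs. I do not expect any genuine obstacle here — this corollary is essentially a repackaging of the preceding lemma.
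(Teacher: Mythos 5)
Your proof is correct and follows essentially the same route as the paper: both arguments identify inputs with outputs (each edge-copy of $G'$ outside $M+\OPT$ is simultaneously an input and an output) and then invoke Lemma~\ref{lem:no_bad_input_bad_output} to conclude that every bad output must be a good input. The extra bookkeeping about multi-edge copies is harmless but not needed beyond what the paper's one-line contradiction argument already covers.
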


\begin{proof}
	Assume for a contradiction that the number of good inputs is smaller than the number of bad outputs. Then there is an edge in~$G'$ which is a bad output but not a good input. In other words, there is an edge in~$G'$ which is both a bad output and a bad input, contradicting Lemma~\ref{lem:no_bad_input_bad_output}.
\qed
\end{proof}

\subsection{Cost}\label{cost}

In our charging scheme, cost is a function that assigns  charges, that originate from the edges, to the nodes. More specifically, the cost of a man $a$ is obtained by counting the edges in~$G'$ incident to $a$, where bad outputs contribute~$2$ and all other edges contribute~$1$. Similarly, the cost of a woman $b$ is obtained by counting the edges in~$G'$ incident to $b$, to which good inputs contribute~$0$ and all other edges contribute~$1$.

In the following, let $\deg(u)$ be the degree of the node $u$ in~$G'$. For  $a\in A$, we define his \textit{cost} as follows:
$$\cost(a) \coloneqq \deg(a) + k\,, \ \text{ where $k$ is the number of bad outputs from $a$};$$
\noindent for $b\in B$, we define her cost as follows:
$$\cost(b) \coloneqq \deg(b) - k\,, \ \text{ where $k$ is the number of good inputs to $b$},$$

\noindent For a node set $S \subseteq A \cup B$, $\cost(S)$ is defined as the sum of costs of all the nodes in $S$.
 
The above definitions lead to next three remarks.

\begin{remark}\label{rem:bad_inputs_cost1}
	Let $b \in B$ be matched in~$M$ and have at least $k$ bad inputs. Then $\cost(b) \ge k + 1$.
\end{remark}
\begin{proof}
	Let $k'$ be the number of good inputs to $b$. Since $b$ is matched in~$M$, the edge $(M(b), b)$ is contained in~$G'$ and therefore it is not an input to $b$. Thus $\deg(b) \ge k + k' + 1$. Hence, by definition of cost, $\cost(b) = \deg(b) - k' \ge k + 1$ holds. 
\qed
\end{proof}

\begin{remark}\label{rem:bad_inputs_cost2}
	Let $b \in B$ be matched in~$\OPT$, have at least $k$ bad inputs, and $(\OPT(b),b)\in E'$ where $E'$ is the edge set of $G'$. Then $\cost(b) \ge k + 1$.
\end{remark}
\begin{proof}
	Let $k'$ be the number of good inputs to $b$. Since the edge $(\OPT(b), b)$ is in~$G'$, it is not an input to $b$. Thus $\deg(b) \ge k + k' + 1$. So, by definition of cost, $\cost(b) = \deg(b) - k' \ge k + 1$ holds. 
\qed
\end{proof}

\begin{remark}\label{rem:M_and_OPT_cost}
	Let $b \in B$ be matched in both $\OPT$ and $M$, $\OPT(b) \neq M(b)$, and $(\OPT(b),b)\in E'$ where $E'$ is the edge set of $G'$. Then $\cost(b) \ge 2$.
\end{remark}
\begin{proof}
	Let $k$ and $k'$ be the numbers of bad inputs and good inputs to $b$, respectively. Since the edges $(\OPT(b), b)$ and $(M(b), b)$ are contained in~$G'$, they are not inputs to $b$. Thus $\deg(b) \ge k + k' + 2$. So, by definition of cost, $\cost(b) = \deg(b) - k' \ge k + 2 \ge 2$ holds. 
\qed
\end{proof}

\subsection{The approximation ratio}

Let $\mathcal{C}(M + \OPT)$ denote the set of connected components in
a graph induced by the edge set $M +
\OPT$. Lemma~\ref{lem:cost_M_augmenting path} below bounds the cost of
$M + \OPT$. Because of space constraints, its proof is deferred to
Appendix \ref{main-lem}.

\begin{restatable}{lemma}{mainlem}
\label{lem:cost_M_augmenting path}
	$\sum_{C \in \mathcal{C}(M + \OPT)} \cost(C) \ge (L+1)|\OPT| + (L-2)(|\OPT|-|M|)$.	
\end{restatable}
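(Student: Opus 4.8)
The plan is to prove the bound component-by-component: if I can show that for every $C\in\mathcal C(M+\OPT)$ we have $\cost(C)\ge (L+1)|\OPT\cap C|+(L-2)(|\OPT\cap C|-|M\cap C|)$, then summing over all components yields the claim, since $\sum_C|\OPT\cap C|=|\OPT|$ and $\sum_C|M\cap C|=|M|$. So I would fix one connected component $C$ of the multigraph induced by $M+\OPT$ and split into cases according to the structure of $C$. As is standard for matching symmetric differences, $C$ is either an alternating cycle, an alternating path, or an isolated edge (with the caveat that doubled edges $(a,b)\in M\cap\OPT$ form their own trivial components). In the cycle case and the even-path case we have $|M\cap C|=|\OPT\cap C|$, so the second term vanishes and it suffices to prove $\cost(C)\ge (L+1)|\OPT\cap C|$; in the $M$-augmenting path case $|\OPT\cap C|=|M\cap C|+1$, so the target becomes $\cost(C)\ge (L+1)|\OPT\cap C|+(L-2)$, and the extra $L-2$ must come from the two endpoints of the augmenting path.

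The key quantitative input is the observation flagged in the ``Analytical techniques'' subsection: every node $v$ matched in $\OPT$ by an edge $(\OPT(v),v)$ present in $G'$ contributes, via Remarks~\ref{rem:bad_inputs_cost1}, \ref{rem:bad_inputs_cost2}, \ref{rem:M_and_OPT_cost}, at least $1+(\text{number of bad inputs/outputs at }v)$ to the cost, and at least $2$ when $v$ is matched in both $M$ and $\OPT$ by distinct edges. The base contribution of $L+1$ per edge of $\OPT\cap C$ should be assembled as follows: each $\OPT$-edge $(a,b)$ has two endpoints; I want each endpoint of each $\OPT$-edge to be chargeable for roughly $(L+1)/2$ on average — more precisely, I expect to argue that for an $\OPT$-edge $(a,b)$, if $(a,b)\in M$ too then $\deg(a)=\deg(b)=L$ (both successful, by the way $M$ is chosen in Stage~2, since these are degree-$L$ nodes or get their full quota), giving $\cost(a)+\cost(b)\ge 2L\ge L+1$ plus slack; and if $(a,b)\notin M$, then I need to show $(\OPT(a),\OPT(b))=(a,b)$ is actually an edge of $G'$ — this requires the stability analysis of Lemma~\ref{lem:M_stability}: a man $a$ unmatched-towards-$b$-in-$G'$ would make $(a,b)$ blocking unless $a$ or $b$ is successful, and a successful node has degree $L$ which already pays a lot. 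The careful bookkeeping is to handle the three promotion statuses and the delicate third bullet in each of the bad-input / bad-output definitions, using Remarks~\ref{rem:bounce_step} and~\ref{rem:rejection_step} to rule out configurations.

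For the endpoint analysis of an $M$-augmenting path $a_0-b_0-a_1-\dots-a_k-b_k$ (say $a_0\in A$, with $a_0$ and $b_k$ the unmatched-in-$M$ endpoints), I would invoke the claim previewed in the text: $b_0$ has at least $L-2$ bad inputs and $a_k$ has at least $L-2$ bad outputs. The reasoning for $b_0$: since $a_0$ is unmatched in $M$ but $(a_0,b_0)\in\OPT$, stability forces $a_0$ to have proposed to (and been rejected by, or still held by) every woman he finds at least as good as $b_0$ while 2-promoted, so $a_0$ is unsuccessful; then $(a_0,b_0)\in\OPT$ with $\OPT(b_0)=a_0$ unsuccessful, and $b_0$ must be popular (else by Remark~\ref{rem:bounce_step}-type reasoning $a_0$ would have a held proposal and could be bounced, or simply $b_0$ unpopular forces low degree contradicting...). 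Once $b_0$ is popular with $\OPT(b_0)$ unsuccessful, \emph{every} input $(a',b_0)$ with $a'\ge_{b_0}\OPT(b_0)$ is bad by the first two bullets of the bad-input definition, and $b_0$ holds $L$ proposals all weakly better than $a_0=\OPT(b_0)$, of which one is $M(b_0)$ and at most one more could be $\OPT(b_0)$ itself — wait, $\OPT(b_0)=a_0$ is unsuccessful so not among them unless... this is where the $L-2$ (rather than $L-1$) comes from: we must discard the edge to $M(b_0)$ and possibly one more held $\OPT$-related edge. A symmetric argument at $a_k$ handles bad outputs. I expect the main obstacle to be exactly this endpoint argument: proving rigorously that the augmenting-path endpoints are unsuccessful and then squeezing out $L-2$ bad inputs/outputs requires combining stability with the rejection-step and forward-step invariants (Remarks~\ref{rem:bounce_step},~\ref{rem:rejection_step}), and the third bullets in the definitions are designed precisely to make the arithmetic work out to $(3L-2)/(2L-1)$ rather than something weaker — so every case must be checked against those bullets. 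Once both the per-$\OPT$-edge lower bound of $L+1$ and the per-augmenting-path bonus of $L-2$ are established, summing over $\mathcal C(M+\OPT)$ finishes the proof.
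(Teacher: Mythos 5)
Your overall framing matches the paper's: decompose $M+\OPT$ into components, prove $\cost(C)\ge (L+1)|\OPT\cap C|$ for components that are not $M$-augmenting paths via a per-$\OPT$-edge bound (the paper's Lemma~\ref{lem:cost_OPT_edge} and Corollary~\ref{cor:cost_not_M_augmenting path}), and extract an extra $L-2$ from each $M$-augmenting path. However, there is a genuine quantitative gap in the augmenting-path case, and it is precisely the point the paper identifies as its central technical novelty. Your plan draws the entire surplus from the two endpoints: $L-2$ bad inputs at $b_0$ and $L-2$ bad outputs at $a_k$. Run the arithmetic: the best the endpoint analysis yields is $\cost(\{a_0,b_0\})\ge L$ (indeed $\cost(b_0)=L$ since all her inputs are bad, and $\cost(a_0)$ may be $0$) and $\cost(\{a_k,b_k\})\ge 2L-1$, while each of the $k-1$ internal $\OPT$-edges contributes at least $L+1$. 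Summing gives
\[
L+(L+1)(k-1)+(2L-1)=(L+1)(k+1)+(L-3)=(L+1)|\OPT\cap C|+(L-3),
\]
which is short by exactly $1$ of the required $(L+1)|\OPT\cap C|+(L-2)$. Using only the ``$L-2$ bad inputs'' bound at $b_0$ (giving $\cost(b_0)\ge L-1$ via Remark~\ref{rem:bad_inputs_cost1}) makes the deficit worse. No amount of sharpening at the endpoints alone closes this gap, because the terminal bounds $L$ and $2L-1$ are actually attainable.

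The missing idea is a propagation argument along the interior of the path: the paper defines a ``points right'' property for the women $b_i$, shows it holds at $b_0$ (Remark~\ref{rem:cost_M_augmenting_start}), shows that if an internal edge $\{a_i,b_i\}$ has cost exactly $L+1$ then the property propagates from $b_{i-1}$ to $b_i$ (Lemma~\ref{lem:cost_M_augmenting path_edge} and Corollary~\ref{cor:cost_M_augmenting_relationship}), and finally shows that the property at $b_{k-1}$ forces $\cost(\{a_{k-1},b_{k-1}\})\ge L+2$ (Lemma~\ref{lem:cost_M_augmenting_end_nodes_relationship}). Hence not all internal edges can be tight, recovering the extra $+1$. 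Your proposal contains no mechanism of this kind, so as written it would prove only a $(3L-1)/(2L)$-type bound rather than $(3L-2)/(2L-1)$. A secondary omission: paths of length $3$ have no internal edges, so even the $(L-3)$ accounting does not apply to them; the paper handles this by proving (Lemma~\ref{lem:no_M_augmenting_path_length_1_or_3}) that $M$-augmenting paths of length $1$ or $3$ do not exist, a step your plan should include.
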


We are ready to prove our main theorem, and restate it here for
completeness.

\mainthm*

\begin{proof}
	By Lemma~\ref{lem:M_existence}, we have
	\[
	|M| \ge \frac{|E'|}L = \sum_{u \in A \cup B} \frac{\deg(u)}{2L}\,.
	\]
By definition of cost and by Corollary~\ref{cor:number_good_inputs_bad_outputs}, we obtain 
	\[
	\sum_{u \in A \cup B} \deg(u) \ge \cost(A \cup B)\,.
	\]
Combining the above inequalities, we get	
	\[
	2L|M| \ge \sum_{u \in A \cup B} \deg(u) \ge \cost(A \cup B) = \sum_{C \in \mathcal{C}(M + \OPT)}\cost(C)\,,
	\]
By Lemma~\ref{lem:cost_M_augmenting path},  we obtain 	
	\[
	2L|M| \ge \sum_{C \in \mathcal{C}(M + \OPT)}\cost(C) \ge (L+1)|\OPT| + (L-2)(|\OPT|-|M|)\,.
	\]
 By rearranging the terms, we obtain 
	\[
	2L |M|	+(L-2) |M|\ge (L+1)|\OPT| +(L-2)|\OPT|\,,
	\]
	and so we obtain the desired inequality
	\[
	(3L-2) |M|\ge (2L-1)|\OPT|\,.
	\]
\qed
\end{proof}

\subsection{Costs of connected components in $M + \OPT$}\label{components of M+OPT}

The purpose of this subsection is to prove
Lemma~\ref{lem:cost_M_augmenting path}. We call a connected component
of $M + \OPT$ \emph{trivial} if it is an isolated node. A component in
$M + \OPT$ is called \emph{alternating path} if the sequence of its
edges alternate being contained in~$M$ and in~$\OPT$. An alternating
path is called \emph{alternating cycle} if its endpoints are the
same. We call an alternating path $\OPT$-augmenting if the edges
incident to its endpoints are in~$M$. Likewise, we call an alternating
path $M$-augmenting if the edges incident to its endpoints are in
$\OPT$.  For ease of exposition, henceforth, we will refer by
alternating paths only to the components that are not alternating
cycles, $\OPT$-augmenting or $M$-augmenting paths.

We begin by studying costs of connected components in $M + \OPT$. For each connected component, we find an appropriate lower bound. The costs of components that are alternating paths, alternating cycles or $\OPT$-augmenting paths, can be bounded from below by $L+1$ multiplied by the number of edges that are both in~$\OPT$ and in the associated component. However, the costs of $M$-augmenting paths can be bounded from below in a stronger way. While the costs for trivial paths, alternating paths, alternating cycles or $\OPT$-augmenting paths can be obtained in a straightforward way, those for $M$-augmenting paths are central to our analysis and require a detailed study. After we establish the lower bounds on the costs of all connected components in $M + \OPT$, we start proving Lemma~\ref{lem:cost_M_augmenting path}.

The following lemma bounds costs of edges in~$\OPT$ from below. Recall that $\deg(u)$ is the degree of the node $u$ in~$G'$.

\begin{lemma}\label{lem:cost_OPT_edge}
	Let $a \in A$ and $b \in B$ be such that $(a, b) \in \OPT$. Then $\cost(\{a, b\}) \ge L$ holds. Furthermore, if $\deg(a) \ge 1$, then $\cost(\{a, b\}) \ge L + 1$; if $\deg(b) \le L-1$, then $\cost(\{a, b\}) \ge 2L - 1$.
\end{lemma}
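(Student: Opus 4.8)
The plan is to write $a=\OPT(b)$ and analyse $\cost(\{a,b\})=\cost(a)+\cost(b)$ directly from the definitions $\cost(a)=\deg(a)+(\text{number of bad outputs from }a)$ and $\cost(b)=\deg(b)-(\text{number of good inputs to }b)$. Two crude bounds will be used throughout: $\cost(a)\ge\deg(a)$, and $\cost(b)\ge\rho_b$, where $\rho_b\in\{0,1,2\}$ is the number of edge‑copies at $b$ that belong to $M+\OPT$ (since the number of good inputs is at most the number of inputs to $b$, which equals $\deg(b)-\rho_b$); symmetrically $a$ has exactly $\deg(a)-\rho_a$ outputs. I will also use repeatedly that a woman rejects a proposal only while she holds $L$ proposals and then keeps $L$ forever, so an \emph{unsuccessful woman never rejected anyone}, i.e. unsuccessful $\Rightarrow$ unpopular. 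The proof then splits on whether $a$ is successful.

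\textbf{Case $a$ unsuccessful.} At termination the while‑loop condition forces $R(a)=N(a)$ for every man who is not successful, and a basic or $1$‑promoted man whose rejection history becomes complete is immediately promoted; hence $a$ is $2$‑promoted with $R(a)=N(a)$, so in particular $b=\OPT(a)\in N(a)$ rejected a proposal from $a$. By the fact established inside the proof of Lemma~\ref{lem:M_stability}, $b$ then holds $L$ proposals at termination, all from men weakly preferred by $b$ to $a=\OPT(b)$; thus $b$ is successful (so the third claim is vacuous here) and popular, and every input $(a',b)$ has $a'\ge_b\OPT(b)$. Since $\OPT(b)=a$ is unsuccessful, each clause in the definition of a good input is excluded, so $b$ has no good input and $\cost(b)=\deg(b)=L$. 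Combined with $\cost(a)\ge\deg(a)$ this gives $\cost(\{a,b\})\ge L$, and $\ge L+1$ whenever $\deg(a)\ge1$.

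\textbf{Case $a$ successful}, i.e.\ $\deg(a)=L$. Then $\cost(a)=L+(\text{number of bad outputs from }a)\ge L$, proving the first claim. The key observation for the other two claims is: if $b$ is unpopular then $b$ never rejected $a$, so $R(a)\ne N(a)$ throughout and $a$ stays basic, and moreover $b\notin R(a)$, so every proposal of $a$ always sits on a woman weakly preferred by $a$ to $b=\OPT(a)$; for each such woman $b'$ holding a proposal of $a$, either $b'>_a b$, making $(a,b')$ a bad output because $a$ is basic, or $b'\simeq_a b$ and, if $b$ is moreover unsuccessful, Remark~\ref{rem:bounce_step} (with the unsuccessful $b$ and the holder $b'$) forces $b'$ unpopular, so $(a,b')$ is again a bad output. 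For the third claim, assume $\deg(b)\le L-1$, so $b$ is unsuccessful, hence unpopular, hence $a$ is successful (an unsuccessful $a$ would make $\OPT(a)=b$ reject $a$) and the observation applies: \emph{all} $L-\rho_a$ outputs of $a$ are bad, so $\cost(a)=\deg(a)+(L-\rho_a)=2L-\rho_a$; together with $\cost(b)\ge\rho_b$ and the easy inequality $\rho_a\le\rho_b+1$ (if $(a,b)\in G'$ then $\rho_b\ge1$ and $\rho_a\le2$; if $(a,b)\notin G'$ then $\rho_a=1$) this yields $\cost(\{a,b\})\ge2L-1$. For the remaining part of the second claim we only need $\cost(a)+\cost(b)\ge L+1$ with $\deg(a)=L\ge1$: if $b$ is matched in $M$ or $(a,b)\in G'$ then $\rho_b\ge1$ and $\cost(b)\ge1$; otherwise $b$ is unmatched in $M$, hence not successful (Stage~2 matches every degree‑$L$ node), hence unpopular, and the observation produces at least one bad output of $a$, giving $\cost(a)\ge L+1$.

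\textbf{Main obstacle.} The delicate points all lie in the successful‑$a$ part feeding the third claim: (i) deducing that an unsuccessful $\OPT(a)$ forces $a$ to remain basic with every proposal landing on a woman weakly preferred to $\OPT(a)$, and (ii) the bookkeeping of the reservation counts $\rho_a,\rho_b$, where the case $M(a)=\OPT(a)$ with an edge of multiplicity at least $2$ must be handled separately. Remark~\ref{rem:bounce_step} is the essential ingredient: without it a proposal of $a$ sitting on a \emph{popular} woman tied (from $a$'s viewpoint) with $\OPT(a)$ would be a \emph{good} output, and the identity $\cost(a)=2L-\rho_a$ — and hence the bound $2L-1$ — would break.
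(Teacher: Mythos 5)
Your proof is correct and takes essentially the same route as the paper's: the paper splits into five cases on the pair $(\deg(a),\deg(b))$, but the substance is identical — when $a$ is unsuccessful you show every input to $b$ fails all good-input clauses (so $\cost(b)=L$), and when $b$ is unsuccessful you use Remark~\ref{rem:bounce_step} plus the fact that $a$ stays basic to make every output from $a$ bad, with your $\rho_a,\rho_b$ bookkeeping matching the paper's split on whether $(a,b)\in G'$ together with Remarks~\ref{rem:bad_inputs_cost1} and~\ref{rem:bad_inputs_cost2}. Your reorganization (successful vs.\ unsuccessful $a$) also absorbs the paper's contradiction case $\deg(a)\le L-1$, $\deg(b)\le L-1$, but this is a cosmetic difference.
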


\begin{proof}
	We consider $\deg(a)$ and $\deg(b)$ simultaneously. Since both are integers between $0$ and $L$, the following cover all possible cases for values of $\deg(a)$ and $\deg(b)$:	
	\begin{enumerate}
		\item\label{case:cost_OPT_edge1} $\deg(a) = 0$ and $\deg(b) = L$.
		\item\label{case:cost_OPT_edge2} $1 \le \deg(a) \le L-1$ and $\deg(b) = L$.
		\item\label{case:cost_OPT_edge3} $\deg(a) = L$ and $\deg(b) = L$.
		\item\label{case:cost_OPT_edge4} $\deg(a) = L$ and $\deg(b) \le L-1$.
		\item\label{case:cost_OPT_edge5} $\deg(a) \le L-1$ and $\deg(b) \le L-1$.
	\end{enumerate}
	
	In cases~\eqref{case:cost_OPT_edge1}  and~\eqref{case:cost_OPT_edge2}, $a$ is unsuccessful. Since $(a, b)$ is an edge in~$G$, $b$ rejected a proposal from $a$, and so $b$ is popular. Thus there are $L$ separate edges $(a^1, b), (a^2, b), \dotsc (a^L, b)$ in~$G'$ such that $a \le_b a^i$ for all $i =1,2,\ldots,L$. Moreover, none of the edges $(a^1, b)$, $(a^2, b)$, \ldots, $(a^L, b)$ is a good input because $b$ is popular, $\OPT(b) \le_b a^i$ for all $i =1,2,\ldots,L$, and $\OPT(b)$ is unsuccessful. Thus $\cost(b) = L$.

Hence, for case~\eqref{case:cost_OPT_edge1},
	\[
		\cost(\{a, b\}) \ge \cost(b) = L\,;
	\]
for case~\eqref{case:cost_OPT_edge2}
	\[
	\cost(\{a, b\}) = \underbrace{\cost(a)}_{\ge \deg(a) \ge 1} + \underbrace{\cost(b)}_{=L} \ge L + 1\,,
	\]
	as required.
	
	In case~\eqref{case:cost_OPT_edge3}, $b$ is matched in~$M$ since $\deg(b) = L$. Thus, by Remark~\ref{rem:bad_inputs_cost1}, $\cost(b) \ge 1$ holds. Hence 
	\[
	\cost(\{a, b\}) = \underbrace{\cost(a)}_{\ge \deg(a) = L} + \underbrace{\cost(b)}_{\ge 1} \ge L + 1\,,
	\]
	as desired.
	
	In case~\eqref{case:cost_OPT_edge4}, $b$ is unsuccessful, and so $b$ did not reject any proposal from $a$. Thus, $a$ is basic and for every edge $(a, b') \in G'$ with $b'\in B$, and so $b' \ge_a b$. Thus, by Remark~\ref{rem:bounce_step} and Definition \ref{bad_good_input_output}, each edge $(a, b') \in G'$ with $b'\in B$ is a bad output from~$a$. 
	
	 Since $\deg(a) = L$, $a$ is matched in~$M$. Thus if there is no edge $(a, b)$ in~$G'$, then $a$ has $L-1$ bad outputs implying the desired inequality
	\[
	\cost(\{a, b\}) \ge \cost(a) = \deg(a) + L-1 = 2L-1\,.
	\]
But if there is an edge $(a, b)$ in~$G'$, then $a$ has $L-2$ bad outputs. Also $\cost(b) \ge 1$ by Remark~\ref{rem:bad_inputs_cost2}. Thus	
	\[
	\cost(\{a, b\}) = \underbrace{\cost(a)}_{\ge 2L-2} + \underbrace{\cost(b)}_{\ge 1} \ge 2L - 2 + 1 = 2L-1 \,,
	\]
	as needed.
	
	In case~\eqref{case:cost_OPT_edge5}, both $a$ and $b$ are unsuccessful. Since $(a, b)$ is an edge in~$G$ and $a$ is unsuccessful, a proposal from $a$ was rejected by $b$ at some time during the algorithm. On the other hand, since $b$ is unsuccessful, she did not reject any proposal during the algorithm, a contradiction.
\qed
\end{proof}

 For completeness, we state the following remark that is trivially true.
\begin{remark}\label{rem:cost_trivial_component}
Trivial components have cost at least $(L+1)|OPT \cap C|$.
\end{remark}

\subsubsection*{Alternating paths, alternating cycles and $\OPT$-augmenting paths} Recall that despite the original definition of alternating paths, we merely mean by them the components that are not alternating cycles, $\OPT$-augmenting or $M$-augmenting paths.
The following corollary of Lemma~\ref{lem:cost_OPT_edge} provides lower bounds on the costs of alternating paths, alternating cycles and $\OPT$-augmenting paths.
\begin{corollary}\label{cor:cost_not_M_augmenting path}
	Let $C$ be a connected component of $M + \OPT$ such that it is an alternating path, alternating cycle, or $\OPT$-augmenting path. Then $\cost(C) \ge (L+1)|\OPT \cap C|$.
\end{corollary}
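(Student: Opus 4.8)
The statement to prove is Corollary~\ref{cor:cost_not_M_augmenting path}: for a connected component $C$ of $M+\OPT$ that is an alternating path, alternating cycle, or $\OPT$-augmenting path, we have $\cost(C) \ge (L+1)|\OPT\cap C|$. The key observation is that in all three cases, every edge of $C$ lies on an alternating path/cycle whose edges strictly alternate between $M$ and $\OPT$, and crucially, \emph{every node that is incident to an edge of $\OPT\cap C$ is also incident to an edge of $G'$} — namely the $\OPT$-edge itself (since $\OPT\subseteq E\supseteq$ edges of $G'$... wait, more carefully: the $\OPT$ edge need not be in $G'$). So the plan is to reduce to Lemma~\ref{lem:cost_OPT_edge} edge-by-edge over the edges of $\OPT\cap C$.

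First I would set $k := |\OPT\cap C|$ and write $\OPT\cap C = \{(a_1,b_1),\dots,(a_k,b_k)\}$ with $a_i\in A$, $b_i\in B$. Since $C$ is one of the three listed types (not an $M$-augmenting path, not trivial), each endpoint of $C$, if $C$ is a path, is incident to an $M$-edge; hence in all three cases every node of $C$ incident to an $\OPT$-edge is in fact \emph{matched in $M$} or is an interior node incident to two edges of $C$ — in particular, each $a_i$ and each $b_i$ has degree at least $1$ in $C\subseteq M+\OPT$... but I need degree in $G'$, not in $M+\OPT$. The point: $M\subseteq E'$ (the output matching lives in $G'$), so any node matched in $M$ has $\deg_{G'}\ge 1$. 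In an alternating path, alternating cycle, or $\OPT$-augmenting path, is every node incident to an $\OPT$-edge also incident to an $M$-edge of $C$? For an alternating cycle: yes. For an alternating path (interior edges alternate, both end-edges may be either $M$ or $\OPT$ — but by our convention "alternating path" excludes $\OPT$-augmenting and $M$-augmenting, so the two end-edges are one of each type, meaning one endpoint is $M$-incident and one is $\OPT$-incident): the $\OPT$-incident endpoint need \emph{not} be $M$-incident. For an $\OPT$-augmenting path: both end-edges are in $M$, so again every $\OPT$-edge has both endpoints incident to an $M$-edge of the path. So the only problematic nodes are the (at most one) $A$-endpoint and (at most one) $B$-endpoint of a genuine alternating path that are incident only to an $\OPT$-edge within $C$.

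So the main argument: apply Lemma~\ref{lem:cost_OPT_edge} to each $(a_i,b_i)$. That lemma gives $\cost(\{a_i,b_i\})\ge L$ always, and $\ge L+1$ whenever $\deg_{G'}(a_i)\ge 1$, and $\ge 2L-1\ge L+1$ (for $L\ge 2$) whenever $\deg_{G'}(b_i)\le L-1$. Summing over $i$ — and noting the sets $\{a_i,b_i\}$ are disjoint because $\OPT$ is a matching — gives $\cost(C)\ge\sum_i \cost(\{a_i,b_i\})$ if $C$'s node set is exactly $\bigcup_i\{a_i,b_i\}$, which holds for alternating cycles and alternating paths of the relevant kinds but for $\OPT$-augmenting paths there are two extra $M$-matched nodes contributing nonnegative cost, and for alternating paths there may be extra nodes too; in all cases $\cost(C)\ge\sum_i\cost(\{a_i,b_i\})$. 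It remains to show each $\cost(\{a_i,b_i\})\ge L+1$. I claim for each $i$, either $\deg_{G'}(a_i)\ge 1$ or $\deg_{G'}(b_i)\le L-1$: if $\deg_{G'}(b_i)=L$ then $b_i$ is successful, hence matched in $M$ (Stage 2 matches all degree-$L$ nodes), so $b_i$ is incident to an $M$-edge; since $C$ is connected and contains $b_i$, this $M$-edge lies in $C$, so $b_i$ is interior, so it is also incident along $C$ to its $\OPT$-edge $(a_i,b_i)$ — fine, but I need info about $a_i$. Hmm: if $b_i$ is matched in $M$, is it matched to $a_i$? Not necessarily. The clean route: if $\deg_{G'}(b_i) = L$, then $b_i$ is matched in $M$; let $M(b_i)=a'$. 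Then $(a',b_i)\in M\cap C$ and $(a_i,b_i)\in\OPT\cap C$. If $a'=a_i$ then $(a_i,b_i)\in M$, so $\deg_{G'}(a_i)\ge 1$. If $a'\ne a_i$, then in $C$, $b_i$ has degree $2$; trace the path/cycle from $b_i$ through $(a_i,b_i)$ to $a_i$ — since $C$ alternates and is of one of the three types, $a_i$ is incident in $C$ to another edge, which (alternation) is in $M$, so $a_i$ is matched in $M$, hence $\deg_{G'}(a_i)\ge 1$ — \emph{unless} $a_i$ is a path-endpoint.

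The hard part — and what I'd pin down carefully — is exactly this endpoint case: a genuine alternating path $C$ with $A$-endpoint $a_i$ incident in $C$ only to the $\OPT$-edge $(a_i,b_i)$, and with $\deg_{G'}(b_i)=L$ so Lemma~\ref{lem:cost_OPT_edge} only directly yields $\cost(\{a_i,b_i\})\ge L$. To handle this I would argue that $a_i$ being a path-endpoint incident only to an $\OPT$-edge within $C$ means $a_i$ is \emph{unmatched in $M$}; but $a_i$ could still have $\deg_{G'}(a_i)\ge 1$ (an output edge to some woman not in $C$). If $\deg_{G'}(a_i)\ge1$, Lemma~\ref{lem:cost_OPT_edge} gives $L+1$ and we're done. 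If $\deg_{G'}(a_i)=0$: then $(a_i,b_i)\in E=$ instance graph with $a_i$ having made no successful proposal; standard Gale–Shapley-style reasoning (as in Lemma~\ref{lem:M_stability}'s proof) shows $b_i$ rejected a proposal from $a_i$, so $b_i$ is popular and holds $L$ proposals all weakly preferred to $a_i$ with $\OPT(b_i)=a_i$ unsuccessful — but this is precisely case~\eqref{case:cost_OPT_edge1} of Lemma~\ref{lem:cost_OPT_edge}, which only gives $\cost(b_i)\ge L$, hence $\cost(\{a_i,b_i\})\ge L$, \emph{not} $L+1$. So in this sub-sub-case I apparently lose. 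The resolution must be: in a genuine alternating path (per the paper's convention, not $M$- or $\OPT$-augmenting, not a cycle) both endpoints' incident edges are one in $M$ and one in $\OPT$, so the $\OPT$-incident endpoint has its $\OPT$-edge as its unique $C$-edge but is \emph{not} an endpoint of $\OPT\cap C$ on the other side; re-examining, I should just directly verify that for each $(a_i,b_i)$ one of the two conditions in Lemma~\ref{lem:cost_OPT_edge} holds by a global parity/counting argument over $C$, rather than pointwise — summing $\deg_{G'}$ over all nodes of $C$ and comparing with edges of $M\cap C$. I expect the intended proof simply observes that for each $\OPT$-edge $(a_i,b_i)$ in an alternating path, alternating cycle, or $\OPT$-augmenting path, $a_i$ is incident to an edge of $M$ within $C$ (true by the alternation + type restrictions, since the "bad" endpoint configuration is exactly excluded), giving $\deg_{G'}(a_i)\ge1$ and hence $\cost(\{a_i,b_i\})\ge L+1$ for every $i$; summing and using disjointness plus nonnegativity of leftover nodes' costs yields $\cost(C)\ge (L+1)k = (L+1)|\OPT\cap C|$.
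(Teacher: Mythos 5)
Your overall strategy (reduce to Lemma~\ref{lem:cost_OPT_edge} edge-by-edge over $\OPT\cap C$) is the paper's, and you correctly isolate the one genuinely hard sub-case: a non-augmenting alternating path whose two endpoints are both men, so that one endpoint $a_1$ is incident within $C$ only to its $\OPT$-edge $(a_1,b_1)$, may be unmatched in $M$, and may have $\deg_{G'}(a_1)=0$, in which case Lemma~\ref{lem:cost_OPT_edge} only yields $\cost(\{a_1,b_1\})\ge L$. But having admitted that ``in this sub-sub-case I apparently lose,'' you then resolve it by asserting that the bad endpoint configuration is ``exactly excluded'' by the type restrictions, i.e.\ that every man on an $\OPT$-edge of such a component is incident to an $M$-edge of $C$. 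That assertion is false: the component $a_1-b_1-a_2$ with $(a_1,b_1)\in\OPT$ and $(a_2,b_1)\in M$ is an alternating path in the paper's restricted sense (it is neither $M$- nor $\OPT$-augmenting, nor a cycle), and $a_1$ is exactly the excluded-in-your-claim configuration. So your final argument does not close the case you yourself flagged.

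The paper's actual fix is the one you brushed past: in an alternating path with both endpoints in $A$, written $a_1-b_1-\dots-a_k-b_k-a_{k+1}$ with $(a_i,b_i)\in\OPT$, the \emph{other} endpoint $a_{k+1}$ is matched in $M$ and carries $\cost(a_{k+1})\ge\deg(a_{k+1})\ge 1$; this unit exactly compensates the deficit of $1$ at $(a_1,b_1)$, giving $L+1+(L+1)(k-1)=(L+1)k$. You noted that leftover nodes contribute ``nonnegative'' cost, but the point is that this particular leftover node contributes at least $1$, and that $1$ is needed. With that replacement your proof matches the paper's; without it, the both-endpoints-men case is a genuine gap.
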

\begin{proof}
	First, we note that since the length of an alternating path is even, the endpoints of it are either both men or both women as in (\ref{case:cost_not_M_augmenting path3}) and (\ref{case:cost_not_M_augmenting path4}) below. In contrast, the length of an $\OPT$-augmenting path is odd, and so its endpoints are a man and a woman as in (\ref{case:cost_not_M_augmenting path1}) below. Last, alternating cycles have the general form as in (\ref{case:cost_not_M_augmenting path2}) below, but it can be represented by various ways simply by shifting the nodes to the right or to the left. Assuming $C$ is as stated above, one of the following is true:
	\begin{enumerate}
		\item\label{case:cost_not_M_augmenting path1} $C$ is an OPT-augmenting path of the form $a_0-b_1-\dotsc - a_k-b_{k+1}.$
		\item\label{case:cost_not_M_augmenting path2} $C$ is an alternating cycle of the form $a_1-b_1-\dotsc - a_k-b_k - a_1,$ where $(a_1, b_1) \in \OPT$.
		\item\label{case:cost_not_M_augmenting path3} $C$ is an alternating path of the form $b_1-a_1-\dotsc - b_k-a_k - b_{k+1},$ where $a_1 \in A$ and $(a_1, b_1) \in \OPT$.
		\item\label{case:cost_not_M_augmenting path4} $C$ is an alternating path of the form $a_1-b_1-\dotsc - a_k-b_k - a_{k+1},$ where $a_1 \in A$ and $(a_1, b_1) \in \OPT$.
	\end{enumerate}
	
	For cases~\eqref{case:cost_not_M_augmenting path1}, \eqref{case:cost_not_M_augmenting path2} and~\eqref{case:cost_not_M_augmenting path3}, Lemma~\ref{lem:cost_OPT_edge} implies that $\cost(\{a_i, b_i\}) \ge L+1$ for every $i = 1,\ldots,k$. Thus
	\[
	\cost(C) \ge \sum_{i=1}^k \underbrace{\cost(\{a_i, b_i\})}_{\ge L+1} \ge (L+1)k = (L+1)|\OPT \cap C|\,,
	\]
	as required.
	
	For case~\eqref{case:cost_not_M_augmenting path4}, Lemma~\ref{lem:cost_OPT_edge} implies that $\cost(\{a_i, b_i\}) \ge L+1$ for every $i = 2,\ldots,k$ and $\cost(\{a_1, b_1\}) \ge L$. Since $a_{k+1}$ is matched in~$M$, $\cost(a_{k+1}) \ge 1$ holds. Thus
	\[
	\cost(C) \ge \underbrace{\cost(\{a_1, b_1\})}_{\ge L} + \underbrace{\cost(a_{k+1})}_{\ge 1} +  \sum_{i=2}^k \underbrace{\cost(\{a_i, b_i\})}_{\ge L+1} \ge (L+1)k = (L+1)|\OPT \cap C|\,.
	\]
\qed	
\end{proof}
			
\subsubsection*{$M$-augmenting paths}
In this section, we provide a lower bound on the cost of components in $M+\OPT$, that are $M$-augmenting paths of length at least $5$. We call an edge in an $M$-augmenting path \emph{terminal} if it is incident to either endpoint of the path, and \emph{internal} otherwise. We start by showing that there are no $M$-augmenting paths in $M+\OPT$ of length $1$ or $3$.

 \begin{lemma}\label{lem:no_M_augmenting_path_length_1_or_3}
	There is no $M$-augmenting path in $M + \OPT$, that is of length $1$ or of length $3$.
\end{lemma}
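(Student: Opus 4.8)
\textbf{Proof plan for Lemma~\ref{lem:no_M_augmenting_path_length_1_or_3}.}
The plan is to rule out the two short lengths separately, in each case using the stability of $M$ together with the structural facts recorded in Lemma~\ref{lem:M_stability} and its proof, and in Remark~\ref{rem:bounce_step} and Remark~\ref{rem:rejection_step}.

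First I would dispose of length $1$. An $M$-augmenting path of length $1$ is a single edge $(a,b)\in\OPT$ with both $a$ and $b$ unmatched in $M$. Since $(a,b)\in\OPT\subseteq E$, the pair $(a,b)$ is acceptable, and being unmatched in $M$ both endpoints trivially (weakly) prefer each other to their $M$-partners; hence $(a,b)$ blocks $M$, contradicting Lemma~\ref{lem:M_stability}. Equivalently, one can argue via degrees in $G'$: an endpoint unmatched in $M$ has degree at most $L-1$ in $G'$ (otherwise Lemma~\ref{lem:M_existence} would force it to be matched), so $a$ and $b$ are both unsuccessful; but then, as in case~\eqref{case:cost_OPT_edge5} of Lemma~\ref{lem:cost_OPT_edge}, $a$ being unsuccessful forces $b$ to have rejected a proposal from $a$, while $b$ being unsuccessful means she never rejected anyone — a contradiction.

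The main work is the length-$3$ case, which I expect to be the real obstacle. Let the path be $a_0-b_0-a_1-b_1$ with $(a_0,b_0),(a_1,b_1)\in\OPT$ and $(b_0,a_1)\in M$, where $a_0,a_1\in A$, and $a_0,b_1$ are the endpoints, hence unmatched in $M$ and therefore unsuccessful. I would first use unsuccessfulness of $a_0$: since $(a_0,b_0)\in E$ and $a_0$ is unsuccessful, $b_0$ rejected a proposal from $a_0$, so $b_0$ is popular and, at termination, $b_0$ holds $L$ proposals all from men weakly preferred to $a_0$; in particular $a_1=M(b_0)\ge_{b_0} a_0$. Symmetrically, since $(a_1,b_1)\in E$ and $b_1$ is unsuccessful, and $b_1$ never rejects (being unsuccessful she holds $<L$ proposals, so every proposal she ever received was accepted), $a_1$ must himself be successful — otherwise a proposal of $a_1$ directed at $b_1$ would have been accepted and never removed, contradicting $\deg_{G'}(b_1)<L$ — wait, more carefully: $a_1$ being unsuccessful would mean $a_1$ exhausted all three promotion rounds, in particular at some point proposed to $b_1$ and had that proposal accepted (since $\deg_{G'}(b_1)<L$ always) and, as $b_1$ never rejects, still held, forcing $(a_1,b_1)\in G'$ and $\deg_{G'}(a_1)\ge 1$; the point is that regardless, $b_1$ holding no proposal from $a_1$ at termination combined with $b_1$ never rejecting shows $a_1$ never successfully kept a proposal at $b_1$. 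So I would instead argue: $a_1$ is matched in $M$ with $b_0$, hence $\deg_{G'}(a_1)=L$, so $a_1$ is successful, and all $L$ proposals of $a_1$ are accepted by women weakly preferred by $a_1$ to any woman he was rejected from; since $b_1\in N(a_1)$ but $b_1$ holds no proposal from $a_1$ and never rejected, $a_1$ must strictly prefer all his held proposals to $b_1$, so $b_0 >_{a_1} b_1$ or at least $b_0\ge_{a_1}b_1$ — and combined with $a_1\ge_{b_0}a_0$ I then want to extract a blocking pair or a rejection-step violation. The cleanest route is to observe that $(a_0,b_0)$ and $(a_1,b_1)$ together with $(a_1,b_0)\in M$ and the stability of $M$ already force $b_0>_{a_1}b_1$ (else $(a_1,b_1)$ blocks $M$ via $a_1$ strictly preferring $b_1$; but we need $b_1$'s side — $b_1$ unmatched, so yes it would block), hence $a_1$ strictly prefers $b_0$ to $b_1$; symmetrically stability forces $a_1 >_{b_0} a_0$ is not immediate, but at least $a_1\ge_{b_0}a_0$ from the proposal argument above. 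Then I would feed these into Remark~\ref{rem:rejection_step} / the rejection step: $b_0$ rejected $a_0$, $a_1\simeq_{b_0}a_0$ or $a_1>_{b_0}a_0$; if $a_1>_{b_0}a_0$ there is no contradiction yet, so the genuinely tight subcase is $a_1\simeq_{b_0}a_0$, where $b_0$ holds a proposal from $a_1$ (indeed holds $(a_1,b_0)\in M\subseteq G'$) — one then examines whether $b_0$ holds one or two proposals from $a_1$ and applies Remark~\ref{rem:rejection_step} (with $a$ basic) or a promotion argument to reach a contradiction. I anticipate the delicate point is handling the promotion status of $a_0$ and $a_1$ at $b_0$: the rejection step only yields a contradiction when the rejected man is not strictly dominated, so the proof must track that $a_0$, having been rejected by $b_0$ at all three promotion levels (since $a_0$ is unsuccessful hence exhausted), was eventually rejected as a $2$-promoted man, while any man $b_0$ holds is weakly preferred — and if $M(b_0)=a_1\simeq_{b_0}a_0$ with $a_1$ of lower-or-equal status, the rejection step would have rejected $a_1$'s proposal instead, the contradiction we want. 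I would organize the length-$3$ argument as a short case analysis on $a_1$ versus $a_0$ in $b_0$'s preference ($a_1>_{b_0}a_0$ vs.\ $a_1\simeq_{b_0}a_0$), deriving in the first case a blocking pair for $\OPT$ from $b_0>_{a_1}b_1$ together with $a_1>_{b_0}a_0$... no: $(a_1,b_0)$ with $a_1>_{b_0}a_0=\OPT(b_0)$ and $b_0>_{a_1}b_1=\OPT(a_1)$ is exactly a blocking pair for $\OPT$, contradicting stability of $\OPT$; and in the second case ($a_1\simeq_{b_0}a_0$) deriving a rejection-step contradiction as just sketched.

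So the skeleton is: (i) length $1$ — immediate blocking pair for $M$; (ii) length $3$ — endpoints unsuccessful $\Rightarrow$ $b_0$ popular with $a_1\ge_{b_0}a_0$, and $a_1$ successful with $b_0>_{a_1}b_1$ (using $M$-stability on the non-endpoint side); then split on $a_1>_{b_0}a_0$ (blocking pair for $\OPT$) versus $a_1\simeq_{b_0}a_0$ (contradiction with the rejection step / Remark~\ref{rem:rejection_step}, after noting $b_0$ holds $M(b_0)=a_1$'s proposal and comparing promotion statuses). The main obstacle is getting $b_0>_{a_1}b_1$ cleanly and then managing the promotion-status bookkeeping in the $a_1\simeq_{b_0}a_0$ subcase so that the rejection step genuinely applies.
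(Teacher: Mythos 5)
Your length-$1$ argument and the overall skeleton of your length-$3$ argument (derive $a_1 \ge_{b_0} a_0$ and $b_0 \ge_{a_1} b_1$, exclude the two ties, and exhibit $(a_1,b_0)$ as a blocking pair for $\OPT$) coincide with the paper's proof. However, two steps in the length-$3$ case are wrong as you state them, and both matter. First, stability of $M$ does \emph{not} force $b_0 >_{a_1} b_1$; it only rules out $b_1 >_{a_1} b_0$, leaving $b_0 \simeq_{a_1} b_1$ open, and your fallback claim that $a_1$ ``must strictly prefer all his held proposals to $b_1$'' is also unjustified, since a proposal may land on a tied woman under arbitrary tie-breaking. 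The tie $b_0 \simeq_{a_1} b_1$ has to be excluded via the bounce step, i.e.\ Remark~\ref{rem:bounce_step}: $b_0$ holds a proposal from $a_1$ at termination and $b_1$ is unsuccessful, so $b_0 \simeq_{a_1} b_1$ would force $b_0$ to be unpopular, contradicting the fact that she rejected the unsuccessful $a_0$. You list Remark~\ref{rem:bounce_step} among your tools but never deploy it here, and without it this case is unhandled.

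Second, in the subcase $a_1 \simeq_{b_0} a_0$ you invoke the rejection step with $a_1$ of ``lower-or-equal'' promotion status than the $2$-promoted $a_0$. Equal status yields no contradiction: the rejection step breaks ties among equally desirable proposals arbitrarily, so $b_0$ could legitimately reject $a_0$ while retaining an equally desirable proposal from a $2$-promoted $a_1$. What the paper uses, and what you are missing, is that $a_1$ is \emph{basic}: since $b_1$ is unsuccessful she never rejected any proposal, so $b_1 \in N(a_1)$ never enters $R(a_1)$, hence $R(a_1)$ never becomes equal to $N(a_1)$ and $a_1$ is never promoted. Only with $a_1$ basic is his proposal strictly less desirable to $b_0$ than the $2$-promoted $a_0$'s, which is what makes $b_0$ holding it at termination contradict the rejection step. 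With these two repairs your plan becomes essentially the paper's proof.
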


\begin{proof}
	First, suppose that there is an $M$-augmenting path in $M+\OPT$, that is of length~$1$. That is to say, there exists an edge $(a,b)$ in~$\OPT$ such that neither $a$ nor $b$ is matched in~$M$. Since $(a,b)$ is in~$G$ and none of $a$ and $b$ is matched in~$M$, $(a,b)$ is a blocking pair for $M$, that contradicts Lemma~\ref{lem:M_stability}.
	
	Second, suppose that there is an $M$-augmenting path in $M+\OPT$, that is of length~$3$ and of form $a_0-b_0-a_1-b_1$ where $a_0 \in A$. Since $a_0$ and $b_1$ are unmatched in~$M$, $\deg(a_0) < L$ and $\deg(b_1) < L$ hold, and hence both $a_0$ and $b_1$ are unsuccessful. Since $a_0$ is unsuccessful, he is 2-promoted and was rejected by every woman in his preference list as a 2-promoted man. Since $b_0$ is such a woman, she is popular. Also, we notice that $(a_1,b_0)$ is in~$M$, and hence $b_0$ holds a proposal from $a_1$ when the algorithm terminates. Thus $a_1 \ge_{b_0} a_0$.
	
	Observe that $(a_1,b_1)$ is in~$\OPT$, and hence $(a_1,b_1)$ is in~$G$. Since $b_1$ is unsuccessful, $b_1$ did not reject any proposal during the algorithm. Since no proposal from $a_1$ was rejected by $b_1$, he is basic. Also, $b_0 \ge_{a_1} b_1$ holds since $b_0$ holds a proposal from $a_1$ when the algorithm finishes and no proposal from $a_1$ was rejected by $b_1$. Thus $a_1 \ge_{b_0} a_0$, $b_0 \ge_{a_1} b_1$, $a_0$ is 2-promoted, $b_0$ is popular, $a_1$ is basic and $b_1$ is unsuccessful. 
	
	First, $a_1 \simeq_{b_0} a_0$ cannot hold because $a_1$ is basic, and $b_0$ rejected $a_0$ as a 2-promoted man, whereas $b_0$ holds a proposal from $a_1$ when the algorithm ends. Second, $b_0 \simeq_{a_1} b_1$ cannot hold, otherwise we get a contradiction to Remark~\ref{rem:bounce_step} since $b_0$ is popular, $b_1$ is unsuccessful, and $b_0$ holds a proposal from $a_1$ when the algorithm terminates. Hence we conclude that $a_1 >_{b_0} a_0$ and $b_0 >_{a_1} b_1$ hold. Since $(a_0, b_0) \in \OPT$ and $(a_1, b_1) \in \OPT$, $(a_1,b_0)$ is a blocking pair for $\OPT$, contradicting the stability of $\OPT$.
\qed
\end{proof}

Now, we consider $M$-augmenting paths in $M + \OPT$, that are of lengths at least $5$. Since the length of an $M$-augmenting path is odd, its endpoints are a man and a woman. Note that, our next results assume the representation, where, without loss of generality, the leftmost node is a man. In the following definition, a woman in an $M$-augmenting path points right is the compact way to say that the woman weakly prefers the man on her right to the man on her left, where the weakly preferred man is promoted if she is indifferent between them. 
\begin{definition}\label{def1}
	Let $a_0 - b_0 - a_1 - \ldots - a_k - b_k$ be an $M$-augmenting path of length at least $5$, where $a_0 \in A$. For $i = 0, \ldots, k-1$, we say that $b_i$ \emph{points right} if one of the following is true:
	\begin{itemize}
		\item $a_{i+1} >_{b_i} a_i$.
		\item $a_{i+1} \simeq_{b_i} a_i$, and $a_{i+1}$ is not basic.
	\end{itemize}
\end{definition}

The desired lower bound on the cost of $M$-augmenting paths in $M + \OPT$ is demonstrated by partitioning an $M$-augmenting path into the pieces of the first terminal edge, internal edges, and the last terminal edge, and providing a lower bound on the cost of each piece. 

Remarks~\ref{rem:cost_M_augmenting_start} and~\ref{rem:cost_M_augmenting_end_small} below provide bounds on the costs of the terminal edges of an $M$-augmenting path in $M + \OPT$.

\begin{remark}\label{rem:cost_M_augmenting_start}
	Let $a_0-b_0-a_1- \dotsc -a_k-b_k$ be an $M$-augmenting path in $M + \OPT$ of length $2k +1$, $k \ge 2$, where $a_0 \in A$. Then $\cost(\{a_0, b_0\}) \ge L$. Moreover, $b_0$ rejected a proposal from $a_0$ at some point, and $b_0$ points right.
\end{remark}
\begin{proof}
	First, since $(a_0, b_0) \in \OPT$, Lemma~\ref{lem:cost_OPT_edge} implies $\cost(\{a_0, b_0\}) \ge L$.
	Second, observe that $a_0$ is not matched in~$M$, and hence $a_0$ is unsuccessful. Thus $b_0$ rejected $a_0$ as a 2-promoted man. On the other hand, since $b_0$ has a proposal from $a_1$ when the algorithm finishes, we deduce that $a_1 \ge_{b_0} a_0$ holds. Notice that if $a_1 \simeq_{b_0} a_0$ holds, then $a_1$ is not basic. Thus $b_0$ points right, that finishes the proof.
\qed
\end{proof}

\begin{remark}\label{rem:cost_M_augmenting_end_small}
	Let $a_0-b_0-a_1- \dotsc -a_k-b_k$ be an $M$-augmenting path in $M + \OPT$ of length $2k +1$, $k \ge 2$, where $a_0 \in A$. Then $\cost(\{a_k, b_k\}) \ge 2L-1$.
\end{remark}
\begin{proof}
	Observe that $b_k$ is not matched in~$M$, and hence $\deg(b_k) \le L-1$ holds. Since $(a_k, b_k) \in \OPT$ and $\deg(b_k) \le L-1$, Lemma~\ref{lem:cost_OPT_edge} implies the desired inequality that $\cost(\{a_k, b_k\}) \ge 2L-1$.
\qed
\end{proof}

 Lemma~\ref{lem:cost_M_augmenting path_edge} below is important for a better understanding of the internal edges in~$M$-augmenting paths and can be considered as rather a technical result followed by a corollary that is of an essential use. The proof of Lemma~\ref{lem:cost_M_augmenting path_edge} is presented after we establish the key result of this section in Lemma~\ref{lem:cost_M_augmenting} and prove Lemma~\ref{lem:cost_M_augmenting path}.

\begin{lemma}\label{lem:cost_M_augmenting path_edge}	
	Let $a_0-b_0-a_1- \dotsc -a_k-b_k$ be an $M$-augmenting path in $M + \OPT$ of length $2k +1$, $k \ge 2$, where $a_0 \in A$. Then for every $i = 1, \dots ,k-1$, at least one of the following is true:
	\begin{enumerate}
		\item\label{case:cost_M_augmenting path_edge1} $\cost(\{a_i, b_i\}) \ge L+2$.
		\item\label{case:cost_M_augmenting path_edge2} $b_i$ rejected a proposal from $a_i$ at some point, and $b_i$ points right.
		\item\label{case:cost_M_augmenting path_edge3} $a_i$ is basic and $b_{i-1} >_{a_i} b_i$.
	\end{enumerate}
\end{lemma}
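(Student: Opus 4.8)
The plan is to fix $i \in \{1,\dots,k-1\}$ and argue by cases on the preference relations at the nodes $b_{i-1}$, $b_i$ and the man $a_i$, using the structure of $M$-augmenting paths (so $(a_i,b_i)\in\OPT$ and $(a_i,b_{i-1})\in M$, both edges of $G$, hence $b_{i-1}$ holds a proposal from $a_i$ at termination). I would assume that neither case~\eqref{case:cost_M_augmenting path_edge2} nor case~\eqref{case:cost_M_augmenting path_edge3} holds and derive case~\eqref{case:cost_M_augmenting path_edge1}, i.e. $\cost(\{a_i,b_i\})\ge L+2$.

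First I would extract consequences of the negation of \eqref{case:cost_M_augmenting path_edge3}: either $a_i$ is not basic, or $b_i \ge_{a_i} b_{i-1}$. Separately, I would look at $b_{i-1}$: since $b_{i-1}$ holds a proposal from $a_i$ at termination and $(a_i,b_{i-1})\in M$, if $b_{i-1}$ did \emph{not} point right then $a_i \ge_{b_{i-1}} a_{i-1}$ must fail in the appropriate sense — but actually the more useful direction is to examine $b_i$. The negation of \eqref{case:cost_M_augmenting path_edge2} splits into two sub-cases: either $b_i$ never rejected a proposal from $a_i$, or $b_i$ rejected one but does not point right. I would handle these to pin down the relationship between $a_i$, $a_{i+1}$ at $b_i$ and between $b_{i-1}$, $b_i$ at $a_i$, aiming to force $b_i$ to be popular with many undesirable inputs, or $a_i$ to be unsuccessful/high-promotion with many bad outputs, so that Remarks~\ref{rem:bad_inputs_cost1}, \ref{rem:bad_inputs_cost2} and the cost definition give the bound $L+2$. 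Specifically, if $b_i$ is popular and all of its $L$ held proposals come from men $\ge_{b_i}\OPT(b_i)=a_i$ with the relevant promotion/indifference conditions making them bad inputs, then $\cost(b_i)\ge L$ (none of those is a good input, and $(M(b_i),b_i)$ is not an input), and adding $\cost(a_i)\ge\deg(a_i)\ge 2$ (since $a_i$ is matched in both $M$ and $\OPT$, with $M(a_i)=b_{i-1}\ne b_i=\OPT(a_i)$) yields $L+2$. If instead $a_i$ contributes bad outputs — e.g. $a_i$ is basic or $1$-promoted and proposes "past" $\OPT(a_i)=b_i$ — I would use Remark~\ref{rem:bounce_step} together with the third bullet of the bad-output definition to count enough bad outputs, combined with $\cost(b_i)\ge 1$ or $\ge 2$ from Remarks~\ref{rem:bad_inputs_cost1}/\ref{rem:M_and_OPT_cost}.

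The main obstacle I expect is the careful bookkeeping in the sub-case where $b_i$ rejected a proposal from $a_i$ but does not point right: here I must simultaneously use the rejection step (so at the time of rejection $a_i$ was least desirable, which, combined with $b_i$ holding a proposal from $a_{i+1}$ where $a_i \ge_{b_i} a_{i+1}$ since $b_i$ does not point right, forces strict preference relations and promotion-status constraints) and the fact that $(a_i,b_i)\in\OPT$ with $\OPT$ stable. Reconciling "$b_i$ rejected $a_i$" with "$a_i\ge_{b_i}a_{i+1}=M(b_i)$" should force either a stability violation of $\OPT$ (ruling the case out) or a configuration with enough bad inputs at $b_i$. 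I would also need to handle the interaction with $a_i$'s promotion status cleanly: a rejection by $b_i$ of $a_i$ means $b_i\in R(a_i)$ at some stage, which constrains whether $a_i$ can later be basic, feeding back into the negation of \eqref{case:cost_M_augmenting path_edge3}. The proof is essentially a finite case analysis, and the work is in verifying that each surviving case either contradicts stability of $\OPT$/Lemma~\ref{lem:M_stability}, contradicts the rejection or bounce step, or produces at least $L$ charges at $b_i$ plus at least $2$ at $a_i$ (or the symmetric split).
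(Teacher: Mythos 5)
Your overall plan---negate \eqref{case:cost_M_augmenting path_edge2} and \eqref{case:cost_M_augmenting path_edge3} and force \eqref{case:cost_M_augmenting path_edge1}---is the contrapositive of the case analysis the paper actually performs, so the skeleton is reasonable. But the sketch misses the subcases that carry all the difficulty, and the cost split you propose points in the wrong direction. The inequality $\cost(a_i)\ge\deg(a_i)\ge 2$ does not follow from $a_i$ being matched in both $M$ and $\OPT$: the edge $(a_i,b_i)\in\OPT$ need not lie in~$G'$, so a priori only $\deg(a_i)\ge 1$ holds (from $(a_i,b_{i-1})\in M$). The paper's split in the main case is the opposite one: $\cost(a_i)\ge\deg(a_i)=L$ (after first disposing of $\deg(a_i)<L$, which immediately yields~\eqref{case:cost_M_augmenting path_edge2}) together with $\cost(b_i)\ge 2$. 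More seriously, your claim that popularity of $b_i$ plus $a\ge_{b_i}a_i$ for all held proposals makes those proposals bad inputs is false: by Definition~\ref{bad_good_input_output}, an input from a man tied with $\OPT(b_i)=a_i$ is, when $a_i$ is successful, generically a \emph{good} input (only the narrow 1-promoted subcase makes it bad). So in the subcase where $b_i$ rejected $a_i$ but holds $L-1$ good inputs you cannot extract cost from $b_i$ at all and must instead prove that $b_i$ points right.

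That step requires a counting argument your sketch never mentions, and it is the technical heart of the lemma: the $L-1$ good-input men are all shown to be tied with $a_i$ at $b_i$, so together with $a_{i+1}$ and $a_i$ they would form $L+1$ members of a single tie unless either $a_{i+1}>_{b_i}a_i$ (pointing right) or two of them coincide; each coincidence is then eliminated, or converted into ``$a_{i+1}$ is not basic,'' via the rejection and forward steps. A symmetric tie-size count on the women's side is needed in the subcase where $b_i$ never rejected $a_i$, no edge $(a_i,b_i)$ lies in~$G'$, and $a_i$ has $L-1$ good outputs, in order to land in case~\eqref{case:cost_M_augmenting path_edge3} or reach a contradiction with the forward step. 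Without these two arguments the proof does not close.
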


For an $M$-augmenting path in $M + \OPT$ of length $2k+1$, $k \ge 2$, Lemma~\ref{lem:cost_OPT_edge} implies that each internal edge that is both in the same path and in~$\OPT$ has cost at least $L+1$. The following corollary of Lemma~\ref{lem:cost_M_augmenting path_edge} establishes an essential fact when the cost of such an internal edge is exactly $L+1$.

\begin{corollary}\label{cor:cost_M_augmenting_relationship}
	Let $a_0-b_0-a_1- \ldots -a_k-b_k$ be an $M$-augmenting path in $M + \OPT$ of length $2k +1$, $k \ge 2$, where $a_0 \in A$. For every $i = 1, \dots ,k-1$ such that $\cost(\{a_i, b_i\}) = L+1$, if $b_{i-1}$ rejected a proposal from $a_{i-1}$ at some point and $b_{i-1}$ points right, then  $b_{i}$ rejected a proposal from $a_i$ at some point and $b_i$ points right.
\end{corollary}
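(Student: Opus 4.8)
The plan is to derive a contradiction from the assumption that the conclusion fails, using Lemma~\ref{lem:cost_M_augmenting path_edge}, Definition~\ref{def1}, and the stability of $\OPT$. Fix $i \in \{1,\dots,k-1\}$ with $\cost(\{a_i,b_i\}) = L+1$, and assume the hypothesis that $b_{i-1}$ rejected a proposal from $a_{i-1}$ and points right. Since $\cost(\{a_i,b_i\}) = L+1 < L+2$, case~\eqref{case:cost_M_augmenting path_edge1} of Lemma~\ref{lem:cost_M_augmenting path_edge} does not hold, so at least one of case~\eqref{case:cost_M_augmenting path_edge2} and case~\eqref{case:cost_M_augmenting path_edge3} holds. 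If case~\eqref{case:cost_M_augmenting path_edge2} holds, then $b_i$ rejected a proposal from $a_i$ and points right, which is exactly the desired conclusion; so I would assume toward a contradiction that case~\eqref{case:cost_M_augmenting path_edge2} fails, whence case~\eqref{case:cost_M_augmenting path_edge3} holds, i.e.\ $a_i$ is basic and $b_{i-1} >_{a_i} b_i$.

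Next I would unpack the assumption that $b_{i-1}$ points right. Applying Definition~\ref{def1} with index $i-1$, either $a_i >_{b_{i-1}} a_{i-1}$, or else $a_i \simeq_{b_{i-1}} a_{i-1}$ and $a_i$ is not basic. Since we are in case~\eqref{case:cost_M_augmenting path_edge3}, the man $a_i$ is basic, which rules out the second alternative; hence $a_i >_{b_{i-1}} a_{i-1}$.

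Finally I would assemble a blocking pair for $\OPT$. Along the $M$-augmenting path $a_0-b_0-a_1-\dots-a_k-b_k$ the edges $(a_{i-1},b_{i-1})$ and $(a_i,b_i)$ lie in $\OPT$ while $(a_i,b_{i-1})$ lies in $M$; in particular $\OPT(a_i) = b_i$, $\OPT(b_{i-1}) = a_{i-1}$, and $(a_i,b_{i-1})$ is an edge of $G$ with $(a_i,b_{i-1}) \notin \OPT$. Then $b_{i-1} >_{a_i} b_i = \OPT(a_i)$ by case~\eqref{case:cost_M_augmenting path_edge3}, and $a_i >_{b_{i-1}} a_{i-1} = \OPT(b_{i-1})$ by the previous paragraph, so $(a_i,b_{i-1})$ blocks $\OPT$, contradicting the stability of $\OPT$. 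Hence case~\eqref{case:cost_M_augmenting path_edge2} must hold, which is the claim. I do not anticipate a genuine obstacle here; the only point requiring care is keeping the partner roles along the path straight (which incident edge is in $M$ and which in $\OPT$), so that the blocking pair is recorded against $\OPT$ rather than $M$. Note also that the hypothesis that $b_{i-1}$ rejected a proposal from $a_{i-1}$ is not actually used in this argument; it is carried along only so that the corollary can be chained down the path in the inductive argument that follows.
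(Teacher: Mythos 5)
Your proof is correct and follows essentially the same route as the paper's: both rule out case~\eqref{case:cost_M_augmenting path_edge1} of Lemma~\ref{lem:cost_M_augmenting path_edge} via the cost assumption, and dispose of case~\eqref{case:cost_M_augmenting path_edge3} by combining ``$a_i$ is basic'' with the definition of pointing right to exhibit $(a_i,b_{i-1})$ as a blocking pair for $\OPT$, leaving case~\eqref{case:cost_M_augmenting path_edge2} as the only possibility. Your side remark that the hypothesis ``$b_{i-1}$ rejected a proposal from $a_{i-1}$'' is not actually used is accurate and matches the paper, where it is carried along only for the chaining in Lemma~\ref{lem:cost_M_augmenting}.
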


\begin{proof}
	By Lemma~\ref{lem:cost_M_augmenting path_edge}, for every $i = 1, \dots ,k-1$, at least one of the following is true:
	\begin{enumerate}
 		\item \label{case:cost_M_augmenting_relationship1} $\cost(\{a_i, b_i\}) \ge L+2$.
 		\item \label{case:cost_M_augmenting_relationship2}  $b_i$ rejected a proposal from $a_i$ at some point, and $b_i$ points right.
 		\item \label{case:cost_M_augmenting_relationship3} $a_i$ is basic and $b_{i-1} >_{a_i} b_i$.
 	\end{enumerate}
 	
 In case~\eqref{case:cost_M_augmenting_relationship1}, that is an immediate contradiction to $\cost(\{a_i, b_i\}) = L+1$. 
 
 In case~\eqref{case:cost_M_augmenting_relationship3}, $a_i$ is basic. Thus if $b_{i-1}$ points right as stated, then $a_{i-1} <_{b_{i-1}} a_i$. Hence $a_{i-1} <_{b_{i-1}} a_i$ and $b_{i-1} >_{a_i} b_i$, showing that $(a_i, b_{i-1})$ is a blocking pair for $\OPT$, a contradiction to the stability of $\OPT$.
 
 In case~\eqref{case:cost_M_augmenting_relationship2}, we  obtain the desired statement.
\qed	
\end{proof}

Lemma~\ref{lem:cost_M_augmenting_end_nodes_relationship} below provides a bound on the cost of the rightmost internal edge of an $M$-augmenting path in $M + \OPT$ given the fact that is established by Corollary~\ref{cor:cost_M_augmenting_relationship} occurs. The proof of Lemma~\ref{lem:cost_M_augmenting_end_nodes_relationship} is presented after the proof of Lemma~\ref{lem:cost_M_augmenting path_edge}.

\begin{lemma}\label{lem:cost_M_augmenting_end_nodes_relationship}
	Let $a_0-b_0-a_1- \dotsc -a_k-b_k$ be an $M$-augmenting path in $M + \OPT$ of length $2k +1$, $k \ge 2$, where $a_0 \in A$. If $b_{k-1}$ rejected a proposal from $a_{k-1}$, and $b_{k-1}$ points right, then $\cost(\{a_{k-1}, b_{k-1}\}) \ge L+2$.
\end{lemma}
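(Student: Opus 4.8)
\textbf{The plan.} Write $b := b_{k-1}$, $a := a_{k-1} = \OPT(b)$ and $a' := a_k = M(b)$, so that $b_k = \OPT(a')$. I first record the structural facts I will use. Since $b$ rejected a proposal from $a$, $b$ is popular; a woman who rejects once holds $L$ proposals from then on, so $b$ is successful and $\deg(b) = L$. Since $k \ge 2$, $a$ is matched in $M$ with $b_{k-2}$, so $\deg(a) \ge 1$. The endpoint $b_k$ is unmatched in $M$, hence unsuccessful and unpopular; as $(a', b_k) \in E$ and $b$ holds a proposal from $a'$, stability of $M$ (Lemma~\ref{lem:M_stability}) gives $b \ge_{a'} b_k$, and Remark~\ref{rem:bounce_step} applied to $a', b, b_k$ excludes $b \simeq_{a'} b_k$, so $b >_{a'} b_k$. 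Combining this with the hypothesis that $b$ points right and with stability of $\OPT$: if $a' >_b a$ held, then $b >_{a'} b_k = \OPT(a')$ together with $a' >_b a = \OPT(b)$ would make $(a', b)$ block $\OPT$, a contradiction; hence $a' \simeq_b a$ and $a'$ is not basic. In particular $M(b) \simeq_b \OPT(b)$.

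Since $(a, b) \in \OPT$ and $\deg(a) \ge 1$, Lemma~\ref{lem:cost_OPT_edge} gives $\cost(\{a, b\}) \ge L+1$; assume for contradiction that equality holds. Following the case split in the proof of Lemma~\ref{lem:cost_OPT_edge} with $\deg(b) = L$ and $\deg(a) \ge 1$: if $\deg(a) = L$ then $\cost(a) \ge L$ and, as $b$ is matched in $M$, $\cost(b) \ge 1$ (Remark~\ref{rem:bad_inputs_cost1}), so equality forces $\cost(a) = L$ (no bad output from $a$) and $\cost(b) = 1$; if $1 \le \deg(a) \le L-1$ then $a$ is unsuccessful and the same proof gives $\cost(b) = L$, forcing $\cost(a) = 1$, i.e. $\deg(a) = 1$ with unique $G'$-edge $(a, b_{k-2}) \in M$. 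In both situations $(a, b) \notin G'$: in the first because otherwise Remark~\ref{rem:M_and_OPT_cost} (the partners $a = \OPT(b)$ and $a' = M(b)$ being distinct) would give $\cost(b) \ge 2$; in the second by the degree count. So $b$ rejected a proposal from $\OPT(b) = a$ and holds none from him at termination.

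It remains to contradict this. The rejection step implies that every man $c$ with a proposal held by $b$ at termination satisfies $c \ge_b a$ (the rejected proposal from $a$ was least desirable, and afterwards a strictly less desirable arriving proposal is the unique least desirable one and gets rejected unless bounced or forwarded away, the bounce exception being controlled by Remark~\ref{rem:bounce_step}). Hence an input $(c, b)$ with $c >_b a = \OPT(b)$ is a bad input, and --- using $M(b) \simeq_b \OPT(b)$, so that $M(b) \simeq_b \OPT(b) \simeq_b c$ whenever $c \simeq_b a$ --- an input $(c, b)$ with $c \simeq_b a$, $c$ $1$-promoted and $\OPT(b)$ successful is also a bad input. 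In the first situation of the previous paragraph ($a$ successful, $\cost(b) = 1$), this, together with ruling out basic men $\simeq_b a$ holding two proposals at $b$ via Remark~\ref{rem:rejection_step} (legitimate since $(a, b) \notin G'$) and the residual bookkeeping on $2$-promoted proposers and on $a'$ itself, produces a bad input to $b$, contradicting that all $L-1$ inputs to $b$ are good. In the second situation ($a$ unsuccessful), $a$ is $2$-promoted with $R(a) = N(a)$ at termination, so $b_{k-2} = M(a) \in R(a)$, and the mechanics of $b$'s rejection of the $2$-promoted $a$ while holding a proposal from $a' \simeq_b a$, combined with $b >_{a'} b_k$, force $\deg(a) \ge 2$, contradicting $\deg(a) = 1$. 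Thus $\cost(\{a, b\}) = L+1$ is impossible and $\cost(\{a_{k-1}, b_{k-1}\}) \ge L+2$.

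\textbf{Main obstacle.} The delicate part is the last paragraph: turning ``$(a_{k-1}, b_{k-1}) \notin G'$, $b_{k-1}$ rejected $a_{k-1}$, $b_{k-1}$ points right, and $\OPT(M(b_{k-1}))$ is unsuccessful'' into a bad input (or into $\deg(a_{k-1}) \ge 2$) requires following the rejection step's per-man least-desirable-count tie-breaking and the promotion bookkeeping, invoking Remarks~\ref{rem:bounce_step} and~\ref{rem:rejection_step} at the right moments --- essentially the machinery developed in the proof of Lemma~\ref{lem:cost_M_augmenting path_edge}, which is why this lemma is proved afterwards. The role of the hypothesis that $b_k = \OPT(a_k)$ is unsuccessful is precisely to deliver $b_{k-1} >_{a_k} b_k$ and thereby pin $a_k$ into the tie-class of $a_{k-1}$ at $b_{k-1}$, which is what excludes the otherwise-possible minimum-cost configuration.
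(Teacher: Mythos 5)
Your overall strategy does parallel the paper's: assume $\cost(\{a_{k-1},b_{k-1}\})=L+1$, deduce $(a_{k-1},b_{k-1})\notin G'$, force every input to $b_{k-1}$ to be a good input lying in the tie class of $a_{k-1}$, and then use the bound $L$ on tie size together with Remarks~\ref{rem:bounce_step} and~\ref{rem:rejection_step} to manufacture a contradiction. But the two places you flag as "residual bookkeeping" and "mechanics" are precisely where the paper's proof does its work, and as written they are gaps, not proofs. First, the branch $1\le\deg(a_{k-1})\le L-1$ is not disposed of by your claim that it "forces $\deg(a_{k-1})\ge 2$"; no argument is given. The paper eliminates this branch wholesale via Remark~\ref{rem:cost_M_augmenting_end_fourth_end_node}, which shows $\deg(a_{k-1})=L$ by the following chain: $a_{k-1}$ unsuccessful $\Rightarrow$ $b_{k-1}$ rejected him as a $2$-promoted man, so $a_k\simeq_{b_{k-1}}a_{k-1}$ would force $a_k$ to be $2$-promoted, contradicting that $a_k$ is basic; hence $a_k>_{b_{k-1}}a_{k-1}$, which combined with $b_{k-1}>_{a_k}b_k$ gives a blocking pair for $\OPT$ (or, in the tie case, a contradiction with Remark~\ref{rem:bounce_step}). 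You need some version of this argument; nothing in your sketch supplies it. Second, in the main branch the decisive step "produces a bad input to $b$" is exactly the paper's case analysis: the $L$ proposals held by $b_{k-1}$ come from at most $L-1$ distinct men of her tie class (since $a_{k-1}$ holds no proposal there and the tie has size at most $L$), so some man is doubled; if he is basic, Remark~\ref{rem:rejection_step} yields the excluded edge $(a_{k-1},b_{k-1})\in G'$; if he is $1$-promoted, his inputs are bad; if he is $2$-promoted, one contradicts the rejection step using that $a_k$ is \emph{basic} and $a_k\simeq_{b_{k-1}}a_{k-1}$. None of this is in your write-up.

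There is also an internal inconsistency you should notice. From the path structure, $b_k$ is unmatched in $M$, hence unsuccessful, hence never rejected anyone, hence $R(a_k)$ never equals $N(a_k)$ and $a_k$ is \emph{basic} -- a fact the paper states and uses in its final sub-case. Your first paragraph instead concludes from "points right" (after excluding $a_k>_{b_{k-1}}a_{k-1}$ by stability of $\OPT$) that $a_k$ is \emph{not} basic. Both deductions are sound, which means the lemma's hypotheses are jointly unsatisfiable and the statement could be closed right there; but your plan neither cashes in this contradiction nor reconciles it, and carrying "$a_k$ is not basic" forward would derail the $2$-promoted sub-case of the bookkeeping you defer, which needs $a_k$ basic. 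Either finish by observing the contradiction explicitly, or drop the claim that $a_k$ is not basic and follow the paper's case split.
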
	

Now, we have all the tools to bound the cost of $M$-augmenting paths of length at least $5$.
\begin{lemma}\label{lem:cost_M_augmenting}
	Let $C$ be a connected component of $M + \OPT$, that is an $M$-augmenting path of length at least~$5$. Then $\cost(C) \ge (L+1)|\OPT\cap C| + (L-2)$.
\end{lemma}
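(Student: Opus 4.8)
The plan is to partition the $M$-augmenting path $C = a_0 - b_0 - a_1 - \dots - a_k - b_k$ (with $k \ge 2$) into three pieces: the first terminal edge $\{a_0, b_0\}$, the internal edges $\{a_i, b_i\}$ for $i = 1, \dots, k-1$, and the last terminal edge $\{a_k, b_k\}$, and then sum up lower bounds on the cost of each piece. Since $|\OPT \cap C| = k+1$, the target inequality $\cost(C) \ge (L+1)(k+1) + (L-2)$ is what we must reach. Note that $\cost(C) = \cost(\{a_0,b_0\}) + \sum_{i=1}^{k-1}\cost(\{a_i,b_i\}) + \cost(\{a_k,b_k\})$, since these node pairs partition the vertex set of $C$.

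\textbf{Key steps.} First, by Remark~\ref{rem:cost_M_augmenting_start} we have $\cost(\{a_0, b_0\}) \ge L$, and moreover $b_0$ rejected a proposal from $a_0$ and $b_0$ points right. Second, by Remark~\ref{rem:cost_M_augmenting_end_small} we have $\cost(\{a_k, b_k\}) \ge 2L-1$. Third, each internal edge $\{a_i, b_i\}$ for $i = 1, \dots, k-1$ is an $\OPT$-edge, so Lemma~\ref{lem:cost_OPT_edge} gives $\cost(\{a_i, b_i\}) \ge L+1$. Putting just these together yields $\cost(C) \ge L + (k-1)(L+1) + (2L-1) = (L+1)(k+1) + (L-2)$, which is exactly the desired bound — so if every internal edge already has cost $\ge L+1$ with at least one strictly larger, or if the terminal bounds can be pushed, we would be done. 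But the naive sum gives precisely $(L+1)(k+1) + (L-2)$, so actually the straightforward accounting already suffices. Wait — I should double-check: $L + (k-1)(L+1) + (2L-1) = 3L - 1 + (k-1)(L+1) = (k-1)(L+1) + (L+1) + (2L-2) = k(L+1) + 2(L-1) = (L+1)(k+1) + (L-1)$. That overshoots the claim by $1$, so in fact the bound follows immediately from Remarks~\ref{rem:cost_M_augmenting_start} and~\ref{rem:cost_M_augmenting_end_small} and Lemma~\ref{lem:cost_OPT_edge} applied to internal edges, with one unit of slack to spare.

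\textbf{Main obstacle.} Given the slack above, the genuine difficulty is not in proving this particular lemma but rather lies upstream — namely in establishing the auxiliary results (Lemmas~\ref{lem:cost_M_augmenting path_edge} and~\ref{lem:cost_M_augmenting_end_nodes_relationship}, and Corollary~\ref{cor:cost_M_augmenting_relationship}) that feed into the finer, near-tight accounting needed when these $M$-augmenting path bounds are aggregated in Lemma~\ref{lem:cost_M_augmenting path}. So I expect the actual proof of Lemma~\ref{lem:cost_M_augmenting} to be short: invoke Remark~\ref{rem:cost_M_augmenting_start} for $\cost(\{a_0,b_0\}) \ge L$, Remark~\ref{rem:cost_M_augmenting_end_small} for $\cost(\{a_k,b_k\}) \ge 2L-1$, Lemma~\ref{lem:cost_OPT_edge} for each internal $\OPT$-edge contributing $\ge L+1$, and then sum — the arithmetic $L + (k-1)(L+1) + (2L-1) = (L+1)(k+1) + (L-1) \ge (L+1)|\OPT\cap C| + (L-2)$ closes the argument. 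The one subtlety to verify carefully is that the node pairs $\{a_i, b_i\}$ indeed partition $V(C)$ so that the costs add without double-counting, which holds because in an $M$-augmenting path written this way each $b_i$ is matched in $\OPT$ to $a_i$ and the indexing is consistent.
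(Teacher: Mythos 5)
Your arithmetic in the ``Key steps'' paragraph is wrong, and the error is fatal to the whole approach. The naive sum is
\[
L + (k-1)(L+1) + (2L-1) \;=\; (k-1)(L+1) + 3L - 1 \;=\; (L+1)(k+1) + (L-3),
\]
not $(L+1)(k+1)+(L-1)$: in your last step you rewrite $k(L+1)+2(L-1)$ as $(L+1)(k+1)+(L-1)$, but $(L+1)(k+1)=k(L+1)+(L+1)$, so the correct constant is $2(L-1)-(L+1)=L-3$. Hence the straightforward accounting falls short of the claimed bound $(L+1)|\OPT\cap C|+(L-2)$ by exactly one unit; there is no ``slack to spare.''

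That missing unit is precisely where the substance of the lemma lies, and it is the reason the paper develops Corollary~\ref{cor:cost_M_augmenting_relationship} and Lemma~\ref{lem:cost_M_augmenting_end_nodes_relationship} in the first place. The paper's proof begins exactly as you do, obtains $\cost(C)\ge (L+1)|\OPT\cap C|+(L-3)$, and then argues that this inequality cannot be tight: Remark~\ref{rem:cost_M_augmenting_start} gives that $b_0$ rejected a proposal from $a_0$ and points right; if the bound were tight, every internal edge would have cost exactly $L+1$, so Corollary~\ref{cor:cost_M_augmenting_relationship} would propagate the ``rejected and points right'' property along the path to every $b_i$, $i=0,\dots,k-1$; and then Lemma~\ref{lem:cost_M_augmenting_end_nodes_relationship} forces $\cost(\{a_{k-1},b_{k-1}\})\ge L+2$, a contradiction that yields the extra $+1$. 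You explicitly dismiss these auxiliary results as an ``upstream'' difficulty not needed for this lemma, when in fact they are the indispensable ingredient that closes the one-unit gap your summation leaves open.
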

\begin{proof}
	Let $C$ be an $M$-augmenting path in $M + \OPT$ of length $2k+1, k \ge 2$. Recall our assumption that, without loss of generality, $C$ is of the form $a_0 - b_0 - a_1 - \dotsc -  a_k - b_k$, where $a_0 \in A$. Then 
		\begin{align*}
		\cost(C) =& \underbrace{\cost(\{a_0, b_0\})}_{\ge L \ \text{by Remark~\ref{rem:cost_M_augmenting_start}}} + \sum_{i=1}^{k-1} \underbrace{\cost(\{a_i, b_i\})}_{\ge L+1 \ \text{by Lemma~\ref{lem:cost_OPT_edge}}} + \underbrace{\cost(\{a_k, b_k\})}_{\ge 2L-1 \ \text{by Remark~\ref{rem:cost_M_augmenting_end_small}}} \ge
		\\ &L + (L+1)(k-1) + 2L-1=
		\\& (L+1)(k-1)+2(L+1) + (L-3)=
		\\&(L+1)(k+1)+(L-3) = 
		\\&(L+1)|OPT \cap C| + (L-3).
		\end{align*}
	
	By Remark~\ref{rem:cost_M_augmenting_start}, $b_0$ rejected a proposal from $a_0$ at some point, and $b_0$ points right. Suppose now that the above inequality is tight only. Then Corollary~\ref{cor:cost_M_augmenting_relationship} implies that, for all $i = 0, \ldots, k-1$, $b_i$ rejected a proposal from $a_i$, and $b_i$ points right. But then, Lemma~\ref{lem:cost_M_augmenting_end_nodes_relationship} implies that $\cost(\{a_{k-1}, b_{k-1}\}) \ge L+2$ holds, contradicting that the above inequality is tight. Thus we get the desired inequality $\cost(C) \ge (L+1)|\OPT\cap C| + (L-2)$.
\qed
\end{proof}

\begin{proof*}{Proof of Lemma~\ref{lem:cost_M_augmenting path}}\label{proof_lem:cost_M_augmenting path}
	By Corollary~\ref{cor:cost_not_M_augmenting path} and Remark~\ref{rem:cost_trivial_component}, for every connected component $C$ in $M + \OPT$ that is not an $M$-augmenting path, $\cost(C) \ge (L+1)|\OPT \cap C|$ holds. Also, by Lemma~\ref{lem:cost_M_augmenting}, for each connected component $C$ in $M + \OPT$ that is an $M$-augmenting path of length at least 5, $\cost(C) \ge (L+1)|\OPT \cap C| + (L-2)$ holds. Since there are at least $|\OPT| - |M|$ $M$-augmenting paths in $M + \OPT$, we obtain the desired inequality $\sum_{C \in \mathcal{C}(M + \OPT)} \cost(C) \ge (L+1)|\OPT| + (L-2)(|\OPT|-|M|)$.
\qed
\end{proof*}

\begin{proof*}{Proof of Lemma~\ref{lem:cost_M_augmenting path_edge}}
\label{proof_lem:cost_M_augmenting path_edge}
	Clearly, $(a_i, b_{i-1})$ and $(a_{i+1},b_i)$ are contained in~$G'$ since they are in~$M$. Thus $\deg(a_i) \ge 1$ and $\deg(b_i) \ge 1$. Moreover, since $(a_i, b_i)$ is included in~$G$, at least one of the following is true: $\deg(a_i) = L$; and $\deg(b_i) = L$. Hence it is sufficient to consider the following cases:
	\begin{enumerate}[label*=\Roman*.]
		\item \label{case:cost_M_augmenting path_edge_proof1}  $\deg(a_i) < L$ and $\deg(b_i) = L$.
		\item \label{case:cost_M_augmenting path_edge_proof2} $\deg(a_i) = L$.
		\begin{enumerate}[label*=\Roman*.]
			\item  \label{case:cost_M_augmenting path_edge_proof2.1} $b_i$ rejected a proposal from $a_i$.
			\begin{enumerate}[label*=\Roman*.]
				\item \label{case:cost_M_augmenting path_edge_proof2.1.1} $b_i$ has at most $L-2$ good inputs.
				\item \label{case:cost_M_augmenting path_edge_proof2.1.2} $b_i$ has $L-1$ good inputs.
			\end{enumerate}
			\item \label{case:cost_M_augmenting path_edge_proof2.2}  $b_i$ did not reject any proposal from $a_i$.
			\begin{enumerate}[label*=\Roman*.]
			\item\label{case:cost_M_augmenting path_edge_proof2.2.1} there is an edge $(a_i, b_i)$ in~$G'$.
				\item\label{case:cost_M_augmenting path_edge_proof2.2.2} there is not an edge $(a_i, b_i)$ in~$G'$.
				\begin{enumerate}[label*=\Roman*.]
					\item \label{case:cost_M_augmenting path_edge_proof2.2.2.1} $a_i$ has at least one bad output.
					\item \label{case:cost_M_augmenting path_edge_proof2.2.2.2} $a_i$ has $L-1$ good outputs.
				\end{enumerate}
			\end{enumerate}
		\end{enumerate}
	\end{enumerate}
	 
 In case~\eqref{case:cost_M_augmenting path_edge_proof1}, $a_i$ is unsuccessful. Thus $b_i$ rejected a proposal from $a_i$ as a 2-promoted man. Also, $b_i$ has a proposal from $a_{i+1}$ when the algorithm finishes, implying~\eqref{case:cost_M_augmenting path_edge2}.
 
 In case~\eqref{case:cost_M_augmenting path_edge_proof2.1.1}, $\deg(b_i) = L$ holds since $b_i$ rejected a proposal from $a_i$ at some point during the algorithm. Since $b_i$ has at most $L-2$ good inputs, $\cost(b_i) \ge \deg(b_i) - (L-2) = 2$ holds. Thus
 \[
 \cost(\{a_i, b_i\}) =  \underbrace{\cost(a_i)}_{\ge \deg(a_i) = L} + \underbrace{\cost(b_i)}_{\ge 2} \ge L+2\,,
 \]
 implying~\eqref{case:cost_M_augmenting path_edge1}.
 
 In case~\eqref{case:cost_M_augmenting path_edge_proof2.1.2}, $b_i$ is popular since she rejected a proposal at some point. Let $(a^j, b_i)$ for all $j=1,\ldots,L-1$ be good inputs to $b_i$. Then, by definition of good inputs, $a_i \ge_{b_i} a^j$ for all $j=1,\ldots,L-1$. Also, $a^j \ne a_i$ for all $j=1,\ldots,L-1$ because $(a_i, b_i) \in \OPT$, $(a_{i+1}, b_i) \in M$ and $(a^j, b_i)$ for all $j=1,\ldots,L-1$ are good inputs. Since $b_i$ rejected a proposal from $a_i$ at some point while she has proposals from $a_{i+1}$ and $a^j$ for $j=1,\ldots,L-1$ when the algorithm ends, we deduce that $a_i \le_{b_i} a_{i+1}$ and $a_i \le_{b_i} a^j$ for all $j=1,\ldots,L-1$. Since $a_i \ge_{b_i} a^j$ and $a_i \le_{b_i} a^j$, we conclude that $a_i \simeq_{b_i} a^j$ for all $j=1,\ldots,L-1$. 
 
 Since $a_i \le_{b_i} a_{i+1}$, $a_i \simeq_{b_i} a^j$, $a^j \ne a_i$ for  all $j=1,\ldots,L-1$, and ties are of size at most $L$, at least one of the following is true:
 \begin{enumerate}[label*=\roman*.]
 	\item \label{case:cost_M_augmenting path_edge_proof_extra_first_1} $a_i <_{b_i} a_{i+1}$.
 	\item \label{case:cost_M_augmenting path_edge_proof_extra_first_2}$a_i \simeq_{b_i} a_{i+1}$.
 	\begin{enumerate}[label*=\roman*.]
		\item \label{case:cost_M_augmenting path_edge_proof_extra_first_2.1}  there exist $ j',j'' =1,\ldots,L-1$, $j' \ne j''$ such that $a^{j'} = a^{j''}$.
		\item \label{case:cost_M_augmenting path_edge_proof_extra_first_2.2} there exists $ j' =1,\ldots,L-1$ such that $a^{j'} = a_{i+1}$.
	\end{enumerate}
 \end{enumerate}
 
 In case~\eqref{case:cost_M_augmenting path_edge_proof_extra_first_1}, we immediately get~\eqref{case:cost_M_augmenting path_edge2}.
 
 In case~\eqref{case:cost_M_augmenting path_edge_proof_extra_first_2.1}, by definition of good inputs, $a^{j'}$ is  either  basic or 2-promoted. If $a^{j'}$ is basic, that is in contradiction to the rejection step since $a_i \simeq_{b_i} a^{j'}$, $b_i$ rejected a proposal from $a_i$ at some point, $b_i$ holds no proposal from $a_i$ while she holds two proposals from $a^{j'}$ when the algorithm terminates. If $a^{j'}$ is 2-promoted, then $a_{i+1}$ is not basic because $a_{i+1} \simeq_{b_i} a^{j'}$, $b_{i}$ rejected $a^{j'}$ as a 1-promoted man while she holds a proposal from $a_{i+1}$ when the algorithm ends. Thus we conclude~\eqref{case:cost_M_augmenting path_edge2}.
 
 In case~\eqref{case:cost_M_augmenting path_edge_proof_extra_first_2.2}, if $a_{i+1}$ is basic, that is in contradiction to the rejection step because  $a_{i} \simeq_{b_i} a_{i+1}$, $b_i$ rejected a proposal from $a_i$ at some point, $b_i$ holds no proposal from $a_i$  while she holds two proposals from $a_{i+1}$ when the algorithm finishes. Thus $a_{i+1}$ cannot be basic, implying that $b_i$ points right. Hence we deduce~\eqref{case:cost_M_augmenting path_edge2}.
 
 In case~\eqref{case:cost_M_augmenting path_edge_proof2.2.1}, $\cost(a_i) \ge L$ holds. Also, by Remark \ref{rem:M_and_OPT_cost}, $\cost(b_i) \ge 2$ holds since $(a_i, b_i) \in \OPT$, $(a_i, b_i) \in~G'$, and $(a_{i+1}, b_i) \in M$. Thus $\cost(\{a_i, b_i\})=\cost(a_i)+\cost(b_i) \ge L+2$, implying~\eqref{case:cost_M_augmenting path_edge1}. 
 
 In case~\eqref{case:cost_M_augmenting path_edge_proof2.2.2.1}, since $(a_{i+1}, b_i) \in M$, $\cost(b_i) \ge 1$ holds. Also, because $a_i$ has at least one bad output, $\cost(a_i) \ge \deg(a_i) + 1 = L+1$ holds. Thus $\cost(\{a_i, b_i\}) = \cost(a_i) + \cost(b_i) \ge L+2$, implying~\eqref{case:cost_M_augmenting path_edge1}.
 
 In case~\eqref{case:cost_M_augmenting path_edge_proof2.2.2.2}, let $(a_i, b^j)$ for  $j = 1,\ldots,L-1$ be good outputs from $a_i$. Since $b_i$ did not reject any proposal from $a_i$ during the algorithm, $(a_i, b_{i-1}) \in M$, and $(a_i, b^j)$ for  $j = 1,\ldots,L-1$ are outputs, we deduce that $a_i$ is basic, $b_i \le_{a_i} b_{i-1}$, $b_i \le_{a_i} b^j$ for all $j = 1,\ldots,L-1$. Since $a_i$ is basic and $(a_i, b^j)$ for all $j = 1,\ldots,L-1$ are good outputs from $a_i$, we deduce that, by definition of good outputs, $b_i \ne b^j$, $b_i \ge_{a_i} b^j$ for all $j = 1,\ldots,L-1$, and hence $b_i \simeq_{a_i} b^j$ for all $j = 1,\ldots,L-1$. 
 
 Because $b_i \le_{a_i} b_{i-1}$, $b_i \simeq_{a_i} b^j$, $b_i \ne b^j$ for all  $j = 1,\ldots,L-1$, and ties are of size at most $L$, at least one of the following is true:
 \begin{enumerate}[label*=\roman*.]
 	\item \label{case:cost_M_augmenting path_edge_proof_extra_second_1} $b_i <_{a_i} b_{i-1}$.
 	\item \label{case:cost_M_augmenting path_edge_proof_extra_second_2} $b_i \simeq_{a_i} b_{i-1}$.
 	\begin{enumerate}[label*=\roman*.]
 		\item \label{case:cost_M_augmenting path_edge_proof_extra_second_2.1} there exist $ j',j'' =1,\ldots,L-1$, $j' \ne j''$  such that $b^{j'} = b^{j''}$.
 		\item \label{case:cost_M_augmenting path_edge_proof_extra_second_2.2}  there exists $ j' =1,\ldots,L-1$ such that $b^{j'} = b_{i-1}$.
 	\end{enumerate}
 \end{enumerate}
 
In case~\eqref{case:cost_M_augmenting path_edge_proof_extra_second_1}, we immediately get~\eqref{case:cost_M_augmenting path_edge3}.

In cases~\eqref{case:cost_M_augmenting path_edge_proof_extra_second_2.1} and~\eqref{case:cost_M_augmenting path_edge_proof_extra_second_2.2}, by definition of good outputs, $b^{j'}$ is popular and so she rejected a proposal at some point. On the other hand, $b_i \simeq_{a_i} b^{j'}$, $b^{j'}$ holds at least two proposals from $a_i$ when the algorithm finishes, $b_i$ did not reject $a_i$ during the algorithm, and $b_i$ does not hold any proposal from $a_i$ when the algorithm terminates, a contradiction to the forward step for $b^{j'}$.
\qed
\end{proof*}

The following remark is used to simplify the proof of Lemma~\ref{lem:cost_M_augmenting_end_nodes_relationship} below.

\begin{remark}\label{rem:cost_M_augmenting_end_fourth_end_node}
	Let $a_0-b_0-a_1- \dotsc -a_k-b_k$ be an $M$-augmenting path in $M + \OPT$ of length $2k +1$, $k \ge 2$, where $a_0 \in A$. Then $\deg(a_{k-1}) = L.$
\end{remark}
\begin{proof}
	Suppose for a contradiction that $\deg(a_{k-1}) < L$, and so $a_{k-1}$ is unsuccessful. Thus $b_{k-1}$ rejected $a_{k-1}$ as a 2-promoted man, and so $b_{k-1}$ is popular. Since $b_k$ is unmatched in~$M$, $\deg(b_k) < L$ holds. So $b_k$ is unsuccessful, and thus $a_k$ is basic.  Since $b_{k-1}$ rejected a proposal from $a_{k-1}$ at some point, and $b_{k-1}$ has a proposal from $a_k$ when the algorithm ends, we deduce that $a_{k-1} \le_{b_{k-1}} a_k$.
	
	First, if $a_{k-1}\simeq_{b_{k-1}} a_k$ holds, we deduce that $a_k$ is 2-promoted, contradicting the fact that $a_k$ is basic. Thus $a_{k-1} <_{b_{k-1}} a_k$. Since $b_k$ did not reject any proposal from $a_k$ during the algorithm, and $b_{k-1}$ holds a proposal from $a_k$ when the algorithm terminates, $b_{k-1} \ge_{a_{k}} b_{k}$ holds. If $b_{k-1} >_{a_{k}} b_{k}$ holds, then $(a_k, b_{k-1})$ is a blocking pair for OPT, contradicting the stability of OPT. We conclude that $b_{k-1} \simeq_{a_{k}} b_{k}$. But then, since $b_{k-1}$ has a proposal from $a_k$ when the algorithm finishes, $b_k$ is unsuccessful, and $b_{k-1} \simeq_{a_k} b_k$, Remark~\ref{rem:bounce_step} implies that $b_{k-1}$ is unpopular, a contradiction.
\qed
\end{proof}

\begin{proof*}{Proof of Lemma~\ref{lem:cost_M_augmenting_end_nodes_relationship}}
	Observe that $(a_k, b_{k-1}) \in M$, and thus $(a_k, b_{k-1}) \in G'$. Therefore, one of the following cases is true:
	\begin{enumerate}[label*=\Roman*.]
		\item \label{case:cost_M_augmenting_end_nodes_relationship1} there is at least one edge $(a_{k-1}, b_{k-1})$ in~$G'$. 
		\item \label{case:cost_M_augmenting_end_nodes_relationship2} there is no edge $(a_{k-1}, b_{k-1})$ in~$G'$.
		\begin{enumerate}[label*=\Roman*.]
			\item \label{case:cost_M_augmenting_end_nodes_relationship2.2.1} there are at least two parallel edges $(a_k, b_{k-1})$ in~$G'$.
			\item \label{case:cost_M_augmenting_end_nodes_relationship2.2.2} there are exactly $L-1$ edges, $(a^j, b_{k-1})$ for $j=1,\ldots,L-1$ in~$G'$, and $a^j \ne a_{k-1}$, $a^j \ne a_{k}$ for $j=1,\ldots,L-1$.
		\end{enumerate}
	\end{enumerate}
	
	In case~\eqref{case:cost_M_augmenting_end_nodes_relationship1}, Remark~\ref{rem:cost_M_augmenting_end_fourth_end_node} implies $\cost(a_{k-1}) \ge \deg(a_{k-1}) = L$, and Remark~\ref{rem:M_and_OPT_cost} implies $\cost(b_{k-1}) \ge 2$ since $(a_{k-1}, b_{k-1}) \in \OPT$, $(a_{k-1}, b_{k-1}) \in~G'$, and $(a_k, b_{k-1}) \in M$. Thus,
	\[
	\cost(\{a_{k-1}, b_{k-1}\}) = \underbrace{\cost(a_{k-1})}_{\ge L} + \underbrace{\cost(b_{k-1})}_{\ge 2} \ge L+2\,,
	\] as desired.

	In case~\eqref{case:cost_M_augmenting_end_nodes_relationship2.2.1}, since $b_{k-1}$ rejected a proposal from $a_{k-1}$ at some point, and $b_{k-1}$ holds at least two proposals from $a_k$ when the algorithm terminates, we deduce that $b_{k-1}$ is popular and $a_k \ge_{b_{k-1}} a_{k-1}$ holds. Also, since $b_k$ is unmatched in~$M$, $\deg(b_k) < L$ holds. Thus $b_k$ is unsuccessful and $a_k$ is basic. If $a_k \simeq_{b_{k-1}} a_{k-1}$, then Remark~\ref{rem:rejection_step} implies that there is an edge $(a_{k-1}, b_{k-1})$ in~$G'$, a contradiction. Thus $a_k >_{b_{k-1}} a_{k-1}$ holds.

We show that $a_k >_{b_{k-1}} a_{k-1}$ leads to a contradiction. Since $b_k$ is unsuccessful and there is an edge $(a_k, b_{k-1})$ in~$G'$, $b_{k-1} \ge_{a_k} b_{k}$ holds. If $b_{k-1} >_{a_k} b_k$, then $(a_k, b_{k-1})$ is a blocking pair for OPT, contradicting the stability of OPT. Thus $b_{k-1} \simeq_{a_k} b_k$ holds. But then, since $b_{k-1}$ holds a proposal from $a_k$ when the algorithm ends, $b_k$ is unsuccessful, and $b_{k-1} \simeq_{a_k} b_k$, Remark~\ref{rem:bounce_step} implies that $b_{k-1}$ is unpopular, a contradiction.
	
	In case~\eqref{case:cost_M_augmenting_end_nodes_relationship2.2.2}, since  $b_{k-1}$ holds proposals from $a_k$ and $a^j$ for all $j=1,\ldots,L-1$, and $b_{k-1}$ rejected a proposal from $a_{k-1}$ at some point, we deduce that $b_{k-1}$ is popular, $a_k \ge_{b_{k-1}} a_{k-1}$, $a^j \ge_{b_{k-1}} a_{k-1}$ for all $j=1,\ldots,L-1$. Since $b_k$ is unmatched in~$M$, $\deg(b_k) < L$ holds and therefore $b_k$ is unsuccessful.
	
	Analogously to the proof of case~\eqref{case:cost_M_augmenting_end_nodes_relationship2.2.1}, it can be shown that $a_k >_{b_{k-1}} a_{k-1}$ leads to a contradiction. Thus $a_k \simeq_{b_{k-1}} a_{k-1}$, and $a^j \ge_{b_{k-1}} a_{k-1}$ for all $j=1,\ldots,L-1$ hold. Since $a^j \ne a_{k-1}$, $a^j \ne a_{k}$ for all $j=1,\ldots,L-1$, and ties are of size at most $L$, at least one of the following is true:
	\begin{enumerate}[label*=\roman*.]
		\item \label{case:cost_M_augmenting_path proof_extra_second1} there exists $j' =1,\ldots,L-1$ such that $a^{j'} >_{b_{k-1}} a_{k-1}$.
		\item \label{case:cost_M_augmenting_end_nodes_relationship_extra_second2} $a^j \simeq_{b_{k-1}} a_{k-1}$ for all $j =1,\ldots,L-1$.
		\begin{enumerate}[label*=\roman*.]
			\item \label{case:cost_M_augmenting_end_nodes_relationship_extra_second2.1} there exist $ j',j'' =1,\ldots,L-1$, $j' \ne j''$ such that $a^{j'} = a^{j''}$.
		\end{enumerate}
	\end{enumerate}
	
	In case~\eqref{case:cost_M_augmenting_path proof_extra_second1}, $(a^{j'}, b_{k-1})$ is a bad input to $b_{k-1}$ by definition. Thus, by Remark~\ref{rem:bad_inputs_cost1}, $\cost(b_{k-1}) \ge 2$ holds. Since $\cost(a_{k-1}) \ge \deg(a_{k-1}) = L$ holds by Remark~\ref{rem:cost_M_augmenting_end_fourth_end_node}, we obtain the desired inequality
	\[
	\cost(\{a_{k-1}, b_{k-1}\}) =  \underbrace{\cost(a_{k-1})}_{\ge L} + \underbrace{\cost(b_{k-1})}_{\ge 2} \ge L+2\,.
	\]
	
	In case~\eqref{case:cost_M_augmenting_end_nodes_relationship_extra_second2.1}, recall that there is no edge $(a_{k-1}, b_{k-1})$ in~$G'$. Since $b_{k-1}$ rejected a proposal from $a_{k-1}$ during the algorithm, $a^{j'} \simeq_{b_{k-1}} a_{k-1}$, and $(a_{k-1}, b_{k-1}) \notin G'$, we deduce from Remark~\ref{rem:rejection_step} that $a^{j'}$ is not basic. If $a^{j'}$ is 1-promoted, then $(a^{j'}, b_{k-1})$ and $(a^{j''}, b_{k-1})$ are bad inputs to $b_{k-1}$ by definition. Thus, by Remark~\ref{rem:bad_inputs_cost1},  $\cost(b_{k-1}) \ge 3$ holds. Since $\cost(a_{k-1}) \ge \text{deg} (a_{k-1}) = L$ holds by Remark~\ref{rem:cost_M_augmenting_end_fourth_end_node}, we get the desired inequality
	\[
	\cost(\{a_{k-1}, b_{k-1}\}) =  \underbrace{\cost(a_{k-1})}_{\ge L} + \underbrace{\cost(b_{k-1})}_{\ge 3} \ge L+2\,.
	\]
	If $a^{j'}$ is 2-promoted, then $b_{k-1}$ rejected $a^{j'}$ as a 1-promoted man. On the other hand, $a_k \simeq_{b_{k-1}} a^{j'}$, $a_k$ is basic, and $b_{k-1}$ holds a proposal from $a_k$ when the algorithm ends, a contradiction to the rejection step.	
\qed
\end{proof*}

\subsection{Tightness of the analysis}
The following example shows that the bound in Theorem~\ref{thm:main result} is tight. 

\begin{figure}
    \centering
    \caption{An instance with ties of size at most $L$, $L \ge 2$ for which the algorithm outputs a stable matching $M$ with $|\OPT| / |M| = (3L-2)/(2L-1)$}
    \begin{displaymath}
    \arraycolsep=0.5cm
    \begin{array}{ll}
    
   		\text{Men's preferences} & \text{Women's preferences} \\ [2mm]
   		 
        a_0 : (b_0 \enspace b_1^{\gamma} \, \ldots \, b_{L-1}^{\gamma}) & b_0 : (a_0 \enspace a_1^{\beta} \, \ldots \, a_{L-1}^{\beta}) \\ [3mm]
        
        a_1^{\alpha} : (b_1^{\alpha} \enspace b_1^{\gamma} \, \ldots \, b_{L-1}^{\gamma}) & b_1^{\alpha} : a_1^{\alpha} \enspace a_1^{\beta} \, \ldots \, a_{L-1}^{\beta} \\ 
        
        \ \vdots & \ \vdots\\ 
        
        a_{L-1}^{\alpha} : (b_{L-1}^{\alpha} \enspace b_1^{\gamma} \, \ldots \, b_{L-1}^{\gamma}) & b_{L-1}^{\alpha} : a_{L-1}^{\alpha} \enspace a_1^{\beta} \, \ldots \, a_{L-1}^{\beta} \\ [3mm]
        
        a_1^{\beta} : (b_0 \enspace b_1^{\alpha} \, \ldots \, b_{L-1}^{\alpha}) \enspace b_1^{\beta} & b_1^{\beta} : a_1^{\beta} \\ 
        
        \ \vdots & \ \vdots\\ 
        
        a_{L-1}^{\beta} : (b_0 \enspace b_1^{\alpha} \, \ldots \, b_{L-1}^{\alpha}) \enspace b_{L-1}^{\beta} & b_{L-1}^{\beta} : a_{L-1}^{\beta} \\ [3mm]
        
        a_1^{\gamma} : b_1^{\gamma} & b_1^{\gamma} : (a_0 \enspace a_1^{\alpha} \, \ldots \, a_{L-1}^{\alpha}) \enspace a_1^{\gamma} \\ 
        
        \ \vdots & \ \vdots\\ 
        
        a_{L-1}^{\gamma} : b_{L-1}^{\gamma} & b_{L-1}^{\gamma} : (a_0 \enspace a_1^{\alpha} \, \ldots \, a_{L-1}^{\alpha}) \enspace a_{L-1}^{\gamma} \\ 

    \end{array}
    \end{displaymath}
    \label{Example 1}
\end{figure}

\begin{example}
In Figure~\ref{Example 1}, the preference list of each individual is ordered from a most preferred person to a least preferred one, where individuals within parentheses are tied. For example, $a_1^{\beta}$ is indifferent between all the women in his preference list except $b_1^{\beta}$, who is less preferred than the others.

It is straightforward to check that there exists a unique maximum-cardinality stable matching, namely $\OPT = \{(a_0, b_0)\} \cup \{(a_i^{j}, b_i^{j}) \mid i = 1,\ldots,L-1, \ j = \alpha,\beta, \gamma \}$. We show that there exists an execution of the algorithm which outputs the matching $M = \{(a_0, b_0)\} \cup \{(a_i^{\alpha}, b_i^{\gamma}) \mid i = 1,\ldots,L-1 \} \cup \{(a_i^{\beta}, b_i^{\alpha}) \mid i = 1,\ldots,L-1 \}$, leading to the ratio $|\OPT| / |M| = (3L-2)/(2L-1)$. 
\end{example}

\begin{proof}
The following is an execution of the algorithm which leads either to the matching $M$ or a matching with the size of $M$.
\begin{itemize}
\item $a_0$ makes one proposal to every woman in his list; the women accept.
\item $a_i^{\alpha}$ for all $i = 1,\ldots,L-1$ makes one proposal to every woman in his list; the women accept.
\item $a_i^{\beta}$ for all $i = 1,\ldots,L-1$ makes one proposal to every woman except the last one in his list; the women accept.
\item $a_i^{\gamma}$ starts to propose $b_i^{\gamma}$ for all $i = 1,\ldots,L-1$, but each time $a_i^{\gamma}$ makes a proposal, the proposal is rejected; $a_i^{\gamma}$ gives up.
	
\end{itemize}
\qed
\end{proof}

\newpage
\bibliographystyle{splncs04}
\bibliography{bibliography}

\end{document}